\title{Effects of motion in structured populations}
\author{Madison S. Krieger$^{\ast}$}
\author{Alex McAvoy$^{\ast}$}
\author{Martin A. Nowak}
\thanks{$^{\ast}$M. S. K. and A. M. contributed equally to this study. Corresponding author: Martin A. Nowak (\texttt{martin\_nowak@harvard.edu})}
\theoremstyle{definition}
\newtheorem{lemma}{Lemma}
\begin{document}

\allowdisplaybreaks

\begin{abstract}
In evolutionary processes, population structure has a substantial effect on natural selection. Here, we analyze how motion of individuals affects constant selection in structured populations. Motion is relevant because it leads to changes in the distribution of types as mutations march toward fixation or extinction. We describe motion as the swapping of individuals on graphs, and more generally as the shuffling of individuals between reproductive updates. Beginning with a one-dimensional graph, the cycle, we prove that motion suppresses natural selection for death-birth updating or for any process that combines birth-death and death-birth updating. If the rule is purely birth-death updating, no change in fixation probability appears in the presence of motion. We further investigate how motion affects evolution on the square lattice and weighted graphs. In the case of weighted graphs we find that motion can be either an amplifier or a suppressor of natural selection. In some cases, whether it is one or the other can be a function of the relative reproductive rate, indicating that motion is a subtle and complex attribute of evolving populations. As a first step towards understanding less restricted types of motion in evolutionary graph theory, we consider a similar rule on dynamic graphs induced by a spatial flow and find qualitatively similar results indicating that continuous motion also suppresses natural selection.
\end{abstract}

\maketitle

\section{Introduction}
Population structure is of great interest in ecology and evolution. Traditionally, subdivisions have been used to model the geographical patches in which the individuals of a population reside \citep{pollak:JAP:1966,maruyama:GR:1970,christiansen:AN:1974,whitlock:G:1997,wakeley:TPB:2004}. These patches can account for environmental heterogeneity in reproductive rates, clustering and spatial segregation, speciation, and the effects of migration \citep{nagylaki:JMB:1980,mcpeek:AN:1992,gavrilets:PE:2002,church:E:2002,hill:AN:2002,whitlock:G:2005,fu:JSP:2012}. The incorporation of even simple population structures has substantially improved the descriptive power of mathematical models of evolution. As a result, population structure has become a cornerstone of evolutionary analysis.

Evolutionary graph theory is a powerful framework to describe population structure. On an evolutionary graph, individuals occupy the vertices, and edges specify interactions between individuals or dispersal patterns \citep{lieberman:Nature:2005,ohtsuki:Nature:2006,ohtsuki:JTB:2007,broom:PRSA:2008,holland:Nature:2008,houchmandzadeh:NJP:2011,broom:JSTP:2011,hadjichrysanthou:DGA:2011,broom:TF:2013,debarre:NC:2014,allen:Nature:2017}. Graphs can represent spatial structures, social networks, or organizational hierarchies of biological populations. By allowing edge weights to vary and the use of multiple graphs to model different parts of a population structure (such as separate interaction and dispersal neighborhoods), evolutionary graph theory allows one to model a variety of realistic population structures. Here, edge weights may represent spatial heterogeneity or ecological interactions that alter the probabilities that the offspring of one individual will come to occupy a particular adjacent space. For example, the unstructured (or well-mixed) population is given by a complete graph with identical weights; many spatial models are represented by graphs that are regular grids; and island models can be described by weighted graphs, where each island is a complete graph and the connections between islands have smaller weights.

Most evolutionary models involve some form of motion. In a spatially-structured population, reproduction itself is a type of motion because offspring can disperse to locations that are different from that of their parents. Migration, which need not be tied to reproductive events, has been studied in both population genetics \citep{wright:G:1946,kimura:G:1964,heino:AN:2001,vuilleumier:TPB:2010} and evolutionary game theory \citep{miekisz:EGT:2008,ladret:JTB:2008,tarnita:PNAS:2009,hauert:JTB:2012,cressman:PNAS:2014,pichugin:JTB:2015} and represents another type of motion. Other evolutionary models, such as those involving shift updating \citep{allen:JTB:2012b}, repulsion \citep{pavlogiannis:SR:2015}, spatial games \citep{nowak:Nature:1992,nowak:IJBC:1994,nowak:PNAS:1994,lindgren:PD:1994,ranta:EP:2005}, stepping-stone models \citep{korolev:RMP:2010}, expansion and growth \citep{hallatschek:E:2010,pigolotti:TPB:2013}, and spatial self-structuring \citep{boerlijst:PD:1991}, also contain some form of motion. In fact, the motion implicit in many evolutionary models accounts for one of the reasons for the profound influence of population structure on evolutionary dynamics.

Our focus here is on an abstract type of motion, inspired by these works, that has not yet been considered within evolutionary graph theory. To begin, we study motion idealized as a swapping (or shuffling) of individuals that occurs independent of reproductive events. Our motion changes the distribution of traits within a given population structure. This initial approach is distinct from dynamic evolutionary graphs, in which edges can be created or destroyed, for example when individuals of certain types preferentially create connections with one another \citep{pacheco:PRL:2006,pacheco:JTB:2008,wardil:SR:2014,wardil:PLOSONE:2016}, when the graph structure is determined by set membership \citep{tarnita:PNAS:2009}, or when game players imitate the social networks of successful players \citep{cavaliere:JTB:2012,cavaliere:SR:2016}. It is also distinct from models where individuals can move into unoccupied spaces \citep{basanta:TEPJB:2008,basanta:MB:2011,lee:CR:2011,waclaw:Nature:2015,manem:PLOSONE:2015}. Once we establish basic results for the effect of swaps on fixation probabilities, we extend our attention to dynamic graphs, in which the possible individuals which can interact via reproductive dispersal are changing at all times. 

We use the Moran process \citep{moran:MPCPS:1958}, which acts on a finite population of fixed size $N$. In its most basic realization, every individual is one of two types (mutant or resident). At each time step, one individual is chosen to reproduce and one individual dies. These events could occur in either order. For birth-death (BD) updating, an individual is selected from the population for reproduction with a probability proportional to its fitness, and the offspring replaces a random neighbor. For death-birth (DB) updating, the order is reversed, but with fitness once again affecting the birth event only. These update rules generate simple evolutionary processes and are convenient models for studying the effects of motion on natural selection.

We begin by studying the effect of motion in one dimension. The adoption of periodic boundaries leads to the cycle \citep{ohtsuki:PRSB:2006,allen:JTB:2012b}. Let $\rho_{1}\left(r\right)$ be the probability that an arbitrarily placed mutant with relative fitness $r$ takes over a resident population on the cycle. This ``fixation probability" provides a convenient way to quantify the evolutionary success of a mutant type \citep{robertson:PRSB:1960}. Let $\rho_{1}^{\ast}\left(r\right)$ be the fixation probability after introducing motion on the cycle. We prove that any motion on the cycle, even motion that changes arbitrarily or is repeated many times after each update, is a suppressor of natural selection, which means that $\rho_{1}^{\ast}\left(r\right)\leqslant\rho_{1}\left(r\right)$ when $r>1$ and $\rho_{1}^{\ast}\left(r\right)\geqslant\rho_{1}\left(r\right)$ when $r<1$. This result holds for DB updating and mixtures of BD and DB in which birth and death events are not strictly ordered \citep{zukewich:PLoSONE:2013}. The only case where suppression due to motion does not occur is if the update rule is purely BD. That motion suppresses natural selection is therefore a universal property of evolution on one-dimensional structures.

After studying motion on the cycle, we consider more complicated graphs. On the square lattice, random swapping of individuals still leads to results consistent with suppression of selection. This behavior also holds for a particular example of a dynamic graph. However, motion on weighted, directed, or heterogeneous graphs is found to have more nuanced and complicated effects. For example, motion can amplify selection, suppress selection, or even act in such a way as to either amplify or suppress selection depending on the value of the \textit{selective advantage} $r$. These findings are similar to the conclusions drawn from studying the effects of population structure on selection, where spatial structure can amplify or suppress selection \citep{nowak:PNAS:2003,adlam:PRSA:2015,jamiesonlane:JTB:2015,galanis:amplifiers:2015,giakkoupis:amplifiers:2016,goldberg:amplifiers:2016,pavlogiannis:SR:2017}, or even fall into another category altogether \citep{hindersin:PLOSCB:2015}. Motion, on the other hand, is even more intricate because it can arise in many forms and its effects on selection depend heavily on the underlying population structure. Here, we highlight these issues and demonstrate how the effects of motion on selection are contingent on population structure.

\section{Modeling motion on graphs}
Our model of motion is one where individuals exist in vertices on a graph, which is held fixed. The individuals themselves are swapped from one vertex to another, and this represents motion. We call a series of one or more swaps a shuffle. A shuffle of the population is a permutation, $\pi\in\mathfrak{S}_{N}$, where $\mathfrak{S}_{N}$ is the symmetric group on $N$ letters, that is the set of all permutations of $\left\{1,\dots ,N\right\}$. Given a state of the system $\left(s_{1},\dots ,s_{N}\right)$, where $s_{i}$ is either $A$ (mutant) or $B$ (resident) for each $i=1,\dots ,N$, the post-shuffle configuration is $\left(s_{\pi\left(1\right)},\dots ,s_{\pi\left(N\right)}\right)$. The type of individual $i$ is $s_{i}$ before the shuffle and $s_{\pi\left(i\right)}$ after the shuffle, effectively redistributing the types within the fixed population structure.

A shuffle can disturb the boundaries between groups of one type and groups of another. For example, motion on the cycle can break a single cluster into groups of smaller clusters or even into isolated individuals (see Fig. \ref{fig:motionOnCycle}).

\begin{figure}
\centering
\includegraphics[scale=0.55]{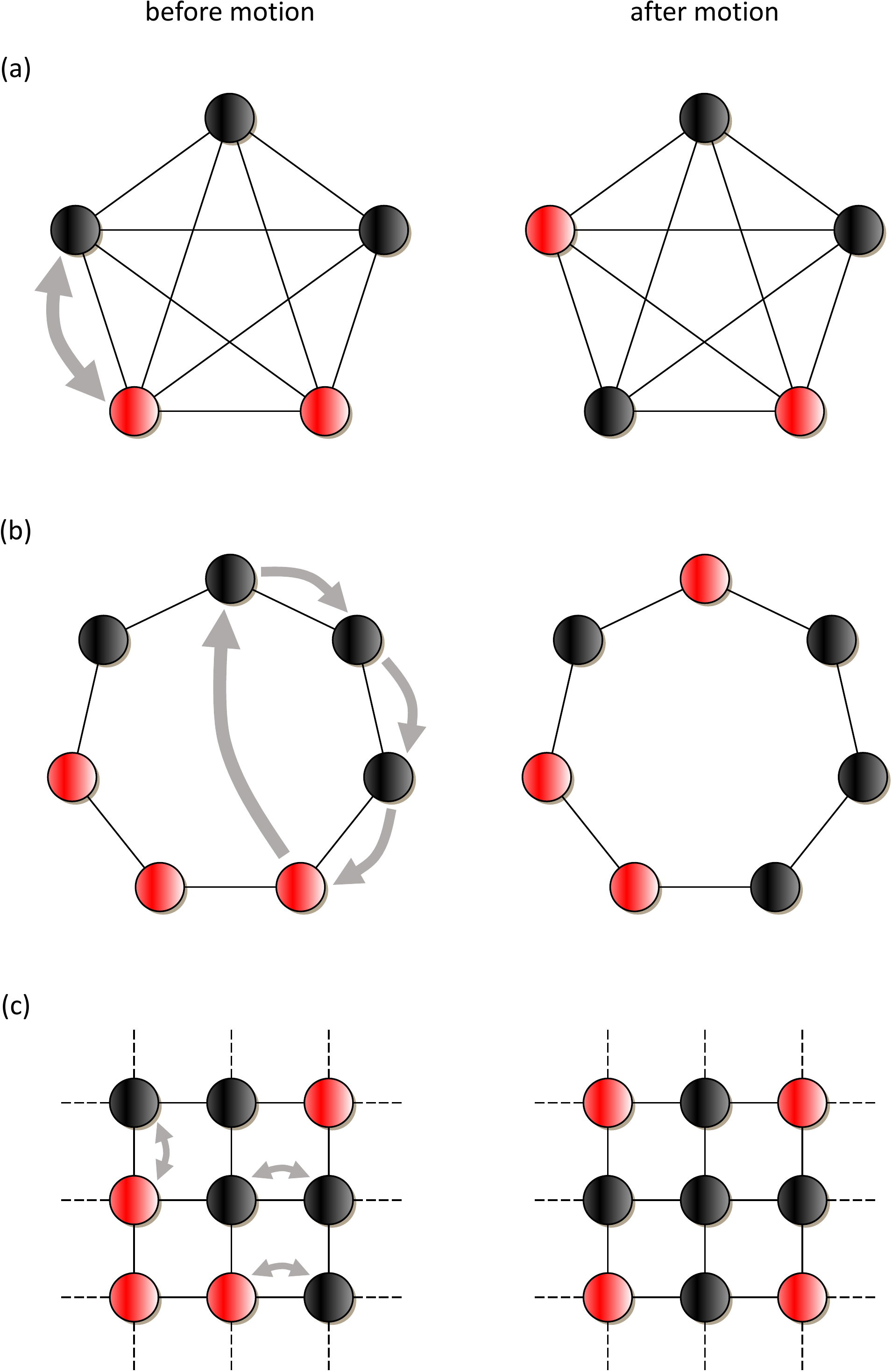}
\caption{\small Motion in structured populations. We study motion as a process in which individuals change their location, but the overall structure of the population is preserved. Individuals of two types, resident (black) and mutant (red) are arranged on a graph. (a) On the complete graph (sometimes called unstructured or well-mixed population) any two individuals are neighbors. Motion does not change the configuration of the population and therefore has no effect on evolutionary dynamics. (b) The cycle is the simplest one-dimensional population structure. Motion can break up clusters of individuals and therefore modify the evolutionary process. (c) The square lattice is a regular two-dimensional structure. Again, motion can change evolutionary dynamics. In the \textbf{Supporting Information} we also discuss heterogeneous motion, which can affect different parts of the graph in different ways.\label{fig:motionOnCycle}}
\end{figure}

While a shuffle captures the concept of motion, such motion can be stochastic rather than deterministic. In order to capture a more generic concept of motion, we consider a probability distribution, $\mu$, over the set of possible shuffles. The probability that the population is shuffled according to $\pi\in\mathfrak{S}_{N}$ at a given time step is $\mu\left(\pi\right)$, which we take to be time-independent. One reason that this formulation is convenient is that the notion of shuffling is closed under composition. That is, if $\mu$ and $\nu$ are such distributions, then the sequence of $\mu$ followed by $\nu$, $\nu\circ\mu$, is again a well-defined distribution over shuffles (see \textbf{Supporting Information} for an expression for $\nu\circ\mu$). This property extends to sequences of shuffles of any length. For example, if we consider only nearest-neighbor swaps, the composition property allows us to easily find the probability of getting from one arrangement to another within a prescribed number of such swaps. As a result, we can model arbitrary motion of types within a graph via a single stochastic shuffle in each time step.

Even on a homogeneous structure, motion can destroy this homogeneity by affecting different parts of the structure in different ways. In Fig. \ref{fig:heterogeneousMotion}, we give an example of motion on half of the cycle that results in a dependence of fixation probability on initial location. In such a scenario, if $\rho_{1,i}$ denotes the fixation probability of a single mutant initially at location $i$, then one can quantify the effects of motion using the average fixation probability,
\begin{linenomath}
\begin{align}\label{eq:averageFP}
\rho_{1} &= \frac{1}{N} \sum_{i=1}^{N} \rho_{1,i} .
\end{align}
\end{linenomath}
Henceforth, by ``fixation probability" we mean the average as defined by Eq. (\ref{eq:averageFP}).

\section{Motion on a one-dimensional structure}
The simplest structure on which to examine the role of motion is a population of $N$ individuals distributed on a cycle, with each individual having two unique neighbors (see Fig. \ref{fig:motionOnCycle}). This structure is equivalent to a line, or one-dimensional lattice, where the endpoints are joined to one another to avoid boundary effects \citep{allen:JMB:2014}. The cycle is also particularly simple due to its spatial homogeneity arising from rotational symmetry.

We study the process where a single mutant of fitness $r$ (relative to the resident type, which has fitness $1$) arises in a population. The state of the population changes through a sequence of discrete update steps, with each update step consisting of the following sequence: First, an individual is chosen with uniform probability to move and is swapped with any individual within $d$ positions on the cycle (chosen with uniform probability). This swapping is repeated $J$ times. After the swapping, reproduction and replacement take place with either birth-death (BD) or death-birth (DB) updating. Fig. \ref{fig:motionOnCycle} depicts this process and demonstrates how motion can disrupt clusters of mutants. Note that while $d$ can take on any value in our model, higher values of $d$ may not be biologically relevant, as they represent much farther dispersals with no effect on the intervening individuals between the two individuals being swapped. In a natural setting, motions which interchange two individuals at a large distance without disturbing the rest of the population structure may be forbidden; for example, in a fluid flow, continuity of flow guarantees that such interchanges are impossible. Therefore, when we perform numerical simulations, we consider small values of $d$, which could be taken to represent swaps which naturally occur due to thermal or driven noise, causing minor disruptions in population structure. However, large values of $d$ are interesting from a theoretical perspective, and we discuss them further in the \textbf{Supporting Information}. In particular, there are threshold values of $d,J$ beyond which no further effect is seen on the fixation probability, representing the maximum amount of mixing in the population due to this type of motion.  

When there is no motion, the system's state space is effectively one-dimensional. The state of the system is uniquely specified by the number of mutants, $m$, since they are all adjacent to one another in a single cluster. The position of this cluster is irrelevant, due to the rotational symmetry of the cycle. Any cluster of size $m$ looks the same as any other cluster of size $m$ after a rotation of the lattice. However, since motion can break up this cluster, we must consider the impact that multiple mutant clusters have on evolutionary dynamics. If it happens that different arrangements of $m$ mutants have different chances of reproducing and replacing the resident, then we must know which of the $2^{N}$ states (modulo the rotational symmetry, which drastically decreases the number of distinct states) the system assumes to predict its subsequent behavior. 

For any possible arrangement of the cycle, we consider the transition probabilities of the evolutionary process. These transitions give the probabilities that one type reproduces and another dies, while the population size remains constant. The transition probabilities are assigned to any single state. If mutants are favored in states with more (versus fewer) clusters, then motion is expected to enhance the mutant's fixation probability due to its tendency to break up clusters. If the fracturing of clusters has the opposite effect on transition probabilities, then the fixation of the mutant is suppressed.

Once the type of motion is prescribed, the stochastic process is completely determined. Given the current state, motion assigns a probability to each possible configuration of $m$ mutants, and then the reproductive process assigns a probability to each possible configuration of $m-1$, $m$, or $m+1$ mutants. However, in practice, it is unwieldy to calculate the fixation probability by summing over transition probabilities given by such a large number of states. Our approach is then to first understand the transition probabilities in the reproductive step, that is, what effect rearranging mutants has on their short-term chances to reproduce whilst the resident type dies. This approach turns out to be sufficient to conclude that motion suppresses natural selection on the cycle, as we show with a formal proof. 

\subsection{Birth-death (BD) updating}
We begin with a simple analysis of the transition probabilities of all possible configurations of $m$ mutants among $N$ individuals. The state of the cycle can be interpreted as a binary string of $A$ and $B$, where $A$ indicates mutant, while $B$ indicates resident. The strings have periodicity $N$, which is the total number of individuals. The full state space of the system is of size $2^{N}$, though not all of these states are distinct due to the rotational symmetry of the cycle. For BD updating, an individual is selected for reproduction with probability proportional to fitness. If there are $m$ mutants, the chance that any particular mutant is selected for reproduction is $r/\left(mr + N-m\right)$, and the chance that any resident type is selected for reproduction is $1/\left(m r + N-m\right)$, where $r$ is the selective advantage. The individual selected to reproduce replaces either the individual on the left or the right with its offspring, each with probability $1/2$. 

Suppose that $m$ mutants are distributed among $c$ clusters (of any size). Let $p^{+}$ (resp. $p^{-}$) be the probability that mutants increase (resp. decrease) in abundance in the next time step, i.e.
\begin{linenomath}
\begin{subequations}
\begin{align}
p^{+} &= \frac{rc}{m r + N-m} ; \\
p^{-} &= \frac{c}{m r + N-m} .
\end{align}
\end{subequations}
\end{linenomath}
A useful quantity for understanding the short-term dynamics is the \textit{forward bias}, $\gamma :=p^{+}/p^{-}$. In this case,
\begin{equation}
\gamma = r . \label{eq:forwardBiasBD}
\end{equation}
Here, the forward bias is the same as that of the Moran process described in the Introduction \citep{lieberman:Nature:2005}. Since motion on a graph can change $c$ but not $\gamma$, it has no effect on fixation probability under BD updating (see \textbf{Supporting Information}).

However, it is easily seen that $1-p^{+}-p^{-}$ is linearly dependent on $c$, the number of clusters. When there are more clusters, the process moves faster toward the absorbing states $m=0$ and $m=N$. The process is slowest when $c=1$, which means that the mutants are arranged in a single cluster on the cycle. Consequently, through its effects on $c$, motion accelerates evolution, whether the new type fixes or dies out. In fact, when an algorithm is applied that maximizes $c$ at every update step, the novel type reaches fixation faster even than in the unstructured, well-mixed population (see Fig. \ref{fig:ALGORITHM} and \textbf{Supporting Information}). 

\subsection{Death-birth (DB) updating}
We now consider the DB update rule. In this process, one individual is selected to die with uniform probability, and a neighbor replaces it with probability proportional to fitness. An isolated resident or mutant is therefore replaced immediately by the other type if selected for death. A vacancy created between a mutant and a resident is filled by the mutant with probability $r/\left(1+r\right)$.

For DB updating, clusters of different sizes have different transition probabilities. In particular, an isolated mutant is in more danger of dying out in the short term than a cluster of two or more mutants, where each has a chance to replace the other if one should die. If $x_{A}$ and $x_{B}$ are respectively the number of isolated $A$ (mutants) and $B$ (resident type) individuals in the current state, and $y_{A}$ and $y_{B}$ are respectively the number of clusters (of any size greater than one) of $A$ and $B$, then
\begin{linenomath}
\begin{subequations}
\begin{align}
q^{-} &= \frac{x_{A} + \frac{2}{1+r} y_{A}}{N} ; \\
q^{+} &= \frac{x_{B} + \frac{2r}{1+r} y_{B}}{N} ; \\
\gamma &= \frac{x_{B}+\frac{2r}{1+r} y_{B}}{x_{A}+\frac{2}{1+r} y_{A}} . \label{eq:gammaCDB}
\end{align}
\end{subequations}
\end{linenomath}
Here, we have used $q$ for transition probabilities instead of $p$ to distinguish DB updating from BD updating, so $\gamma := q^{+}/q^{-}$. This ratio of transition probabilities is plotted in Fig. \ref{fig:gamCycPlot}. The short-term properties of the process under DB updating are extremely state-dependent. We must consider the bias of a particular state of the system $s$, which we write as $\gamma(s)$. We show in \textbf{Supporting Information} that, for an advantageous mutant (those with $r>1$), $1\leqslant\gamma\left(s\right)\leqslant\gamma\left(s'\right)$ whenever $s'$ is a state whose mutants are the same in number as $s$ but are arranged in a cluster. Similarly, $\gamma\left(s'\right)\leqslant\gamma\left(s\right)\leqslant 1$ when the mutant is disadvantageous ($r<1$). Since motion on the cycle can disrupt clusters of mutants, it therefore suppresses natural selection (see \textbf{Supporting Information}). We plot the conditional fixation time of the mutant and the fixation probability in Fig. \ref{fig:DBprobplot}.

\begin{figure}
\centering
\includegraphics[width=0.8\columnwidth]{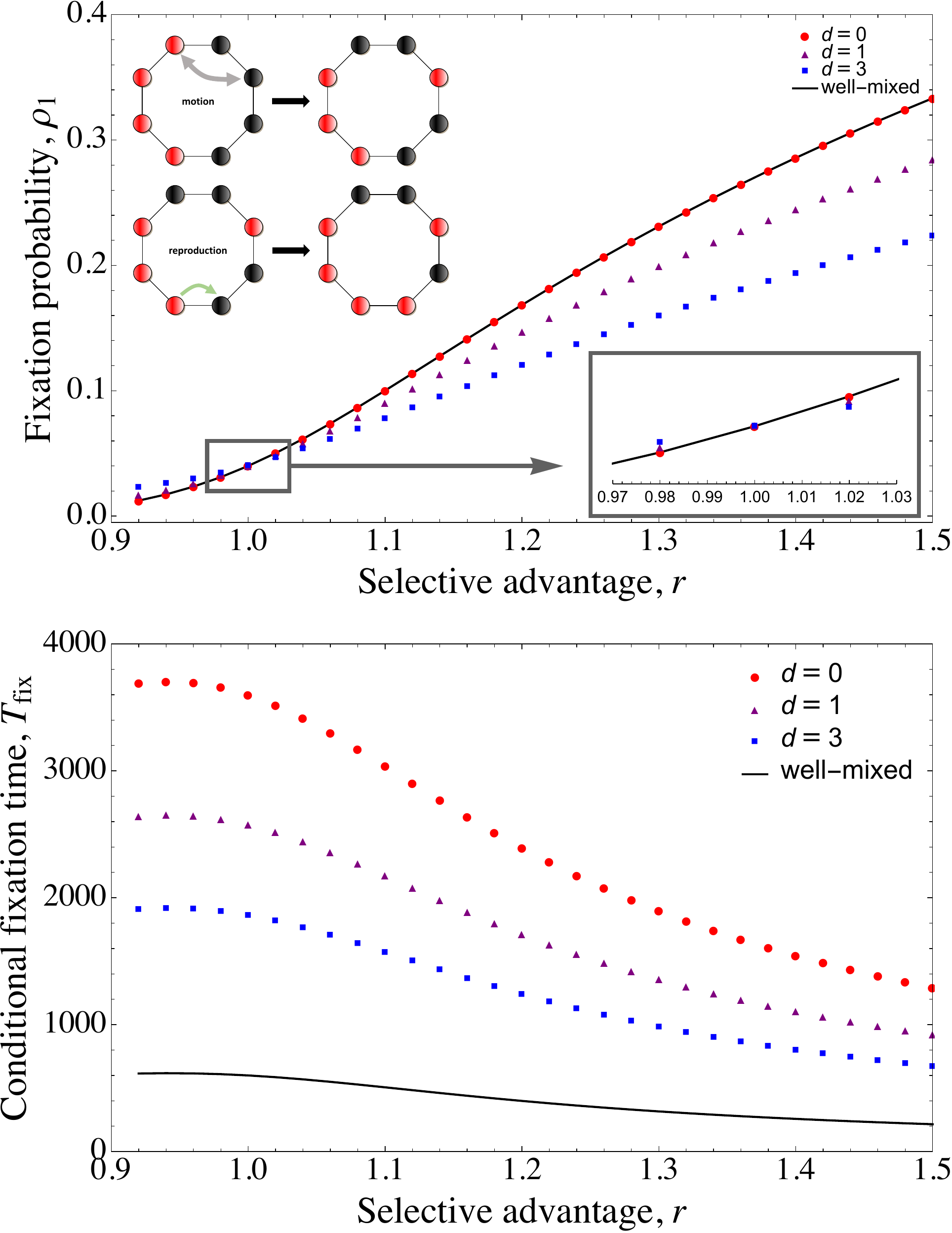}
\caption{\small Death-birth updating on a cycle with motion. Fixation probability (top) and conditional fixation time (bottom) are shown for a single mutant of relative fitness $r$ in a population of $N=25$, averaged over $10^{6}$ realizations. After each reproductive event, one individual is selected at random and swaps position with another random individual who is who is no more than $d$ steps away. The curves shown have $d=0$, which is equivalent to no motion (red circles), $d=1$ (purple triangles), and $d=3$ (blue squares). The well-mixed result for a population of the same size is given by the solid black line. We find that motion acts as a suppresser of natural selection (top) and accelerates the evolutionary process on the cycle (bottom).\label{fig:DBprobplot}}
\end{figure}

\subsection{Mixed BD and DB updating}
In nature, birth and death need not always be strictly ordered. Birth of an individual can come before death of another and, likewise, death of one can come before birth of another. We therefore consider a case in which, at each time step, a DB rule is used with probability $\delta$ and a BD rule is used with probability $1-\delta$ \citep{zukewich:PLoSONE:2013}. Classical BD and DB updating may be thought of as boundary cases occurring at $\delta =0$ and $\delta =1$, respectively.

For mixed BD and DB updating, the forward bias in a given state is
\begin{linenomath}
\begin{subequations}
\begin{align}
\gamma&= \frac{\left(1-\delta\right) p^{+} +\delta q^{+}}{\left(1-\delta\right) p^{-} +\delta q^{-}} .
\end{align}
\end{subequations}
\end{linenomath}
This expression is not a linear combination of the forward biases for BD and DB updating, Eqs. (\ref{eq:forwardBiasBD}) and (\ref{eq:gammaCDB}). Nevertheless, if a given arrangement $s'$ is a state with exactly $\left| s'\right| =\left| s\right|$ mutants arranged in a single cluster, then $\min\left\{\gamma\left(s',\delta\right),1\right\} \leqslant \gamma\left(s,\delta\right)\leqslant\max\left\{\gamma\left(s',\delta\right),1\right\}$. We establish these inequalities in \textbf{Supporting Information} and use them to show that motion suppresses selection for any $\delta\in\left[0,1\right]$.

\section{Motion on more complex graphs}
We first consider square lattices and then some examples of weighted graphs.

\subsection{Square lattices}
To confirm whether the suppression of selection seen with motion on the cycle extends to other structures, we simulated death-birth (DB) updating on the two-dimensional square lattice. The reproduction neighborhood is von Neumann, such that the sites to the north, east, south, and west of a particular site are viable targets to receive offspring. We use periodic boundary conditions. The scheme for motion is the same as for the cycle: at the beginning of an update step, an individual, chosen uniformly at random, swaps position with another individual chosen uniformly-at-random who is not more than $d$ lattice steps away. This process is repeated $J$ times before another DB update occurs.

The results of our simulations are shown in Fig. \ref{fig:DBprobplotLAT}. Even in the absence of motion, the fixation probability on a lattice is already less than the well-mixed case due to the effects of structure on the forward bias $\gamma$ when there are either $1$ or $N-1$ mutants \cite{kaveh:RSOS:2015}. As motion is introduced, we see further suppression of the fixation probability. We did not simulate fixation probabilities for BD updating on the lattice because the forward bias in this case is always $\gamma =r$, which is guaranteed by the fact that the lattice is regular \citep{lieberman:Nature:2005}. Therefore, motion does not change fixation probabilities under BD updating.

\begin{figure}
\centering
\includegraphics[width=0.8\columnwidth]{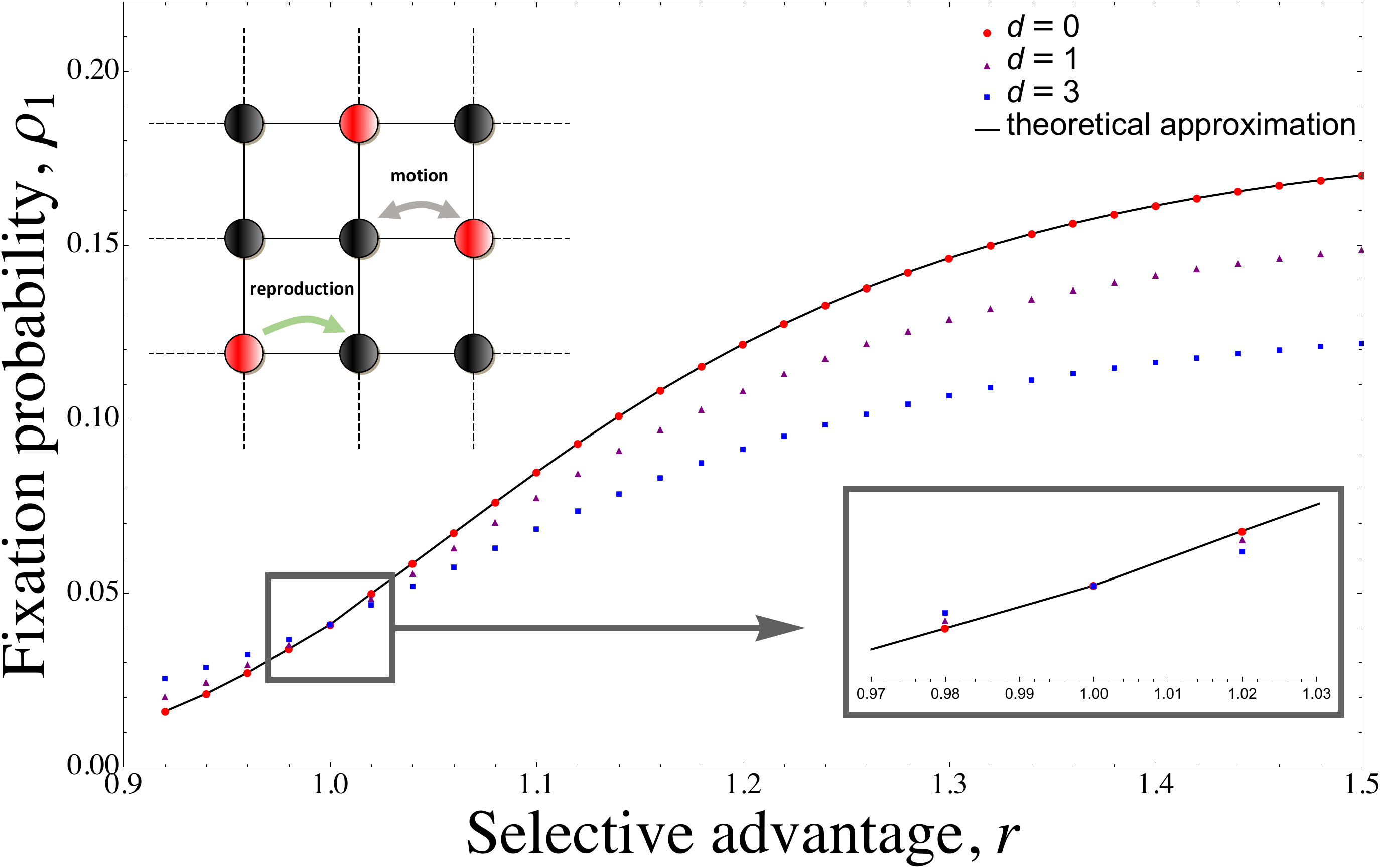}
\caption{\small Death-birth updating on a $5 \times 5$ square lattice with motion, schematized in top-left insert. The lattice is periodic to avoid boundary effects. Individuals are of either the resident type (black) or mutant type (red). The update step is comprised of two events. First, one individual is selected at random and swapped with an individual at most $d$ lattice steps away (gray arrow). Second, an individual is selected uniformly-at-random to die, and is replaced by a neighbor with probability proportional to fitness (green arrow). The curves shown have $d=0$, which is equivalent to no motion (red circles), $d=1$ (purple triangles), and $d=3$ (blue squares). These results are the average of $10^6$ realizations. The fixation probability on a lattice without motion is also given by a solid black line, drawn from an existing theoretical approximation known in the literature \citep[see][]{kaveh:RSOS:2015}, which agrees well with our simulation results. We find that motion suppresses natural selection on the two-dimensional lattice much like on the one-dimensional lattice. \label{fig:DBprobplotLAT}}
\end{figure}

Despite being able to calculate the forward bias $\gamma$ explicitly, our methods for proving suppression on the cycle do not extend to the two-dimensional lattice. Interestingly, there exist configurations of $m\geq3$ mutants for which $\gamma >r$; therefore, it cannot be that on the two-dimensional lattice motion uniformly decreases the forward bias, which was the key to the proof of suppression on the cycle. On the lattice, motion can transiently carry an advantageous ($r>1$) mutant population through a structure that temporarily increases their chances of replacing the native type. This property also suggests that type-dependent motion, where mutants can rearrange themselves into these beneficial structures, could be a valid strategy to amplify their selective advantage. Type dependent motion may also be important when the two species have different motilities --- a standard example is the invasion of cane toads in the South Pacific \citep{phillips:JEB:2010}.

\subsection{Weighted graphs}
So far, we have used graphs simply to indicate who is a neighbor of whom. We can further associate to each (directed) edge a weight, which can account for asymmetry in dispersal patterns \citep{lieberman:Nature:2005}. Suppose that $\Gamma_{ij}\geqslant 0$ indicates the weight of the edge from vertex $i$ to vertex $j$, and $\sum_{j=1}^{N}\Gamma_{ij}=1$ for each $i$. Under DB updating on this graph, an individual, $i$, is first selected for reproduction with probability proportional to relative fitness. The individual at vertex $j$ then dies and is replaced by the offspring of $i$ with probability $\Gamma_{ij}$. Between these updates, as before, motion can rearrange the types residing on the graph.

Remarkably, motion can either suppress or amplify selection on weighted graphs. It can be the case that one particular type of motion on a given weighted graph suppresses selection and another amplifies selection (see Fig. \ref{fig:weightedGraph}). It is also possible to construct a weighted graph on which a particular type of swapping motion both amplifies and suppresses selection: it increases a mutant's fixation probability relative to the static case for certain values of $r>1$ and decreases it for others (Fig. \ref{fig:notAmpSupp}). On a general weighted graph it may therefore be difficult to classify the effects of motion on selective differences between types, which is reminiscent of the difficulties that arise in classifying population structures themselves \citep{nowak:PTRSB:2009}.

\begin{figure}
\centering
\includegraphics[width=0.8\columnwidth]{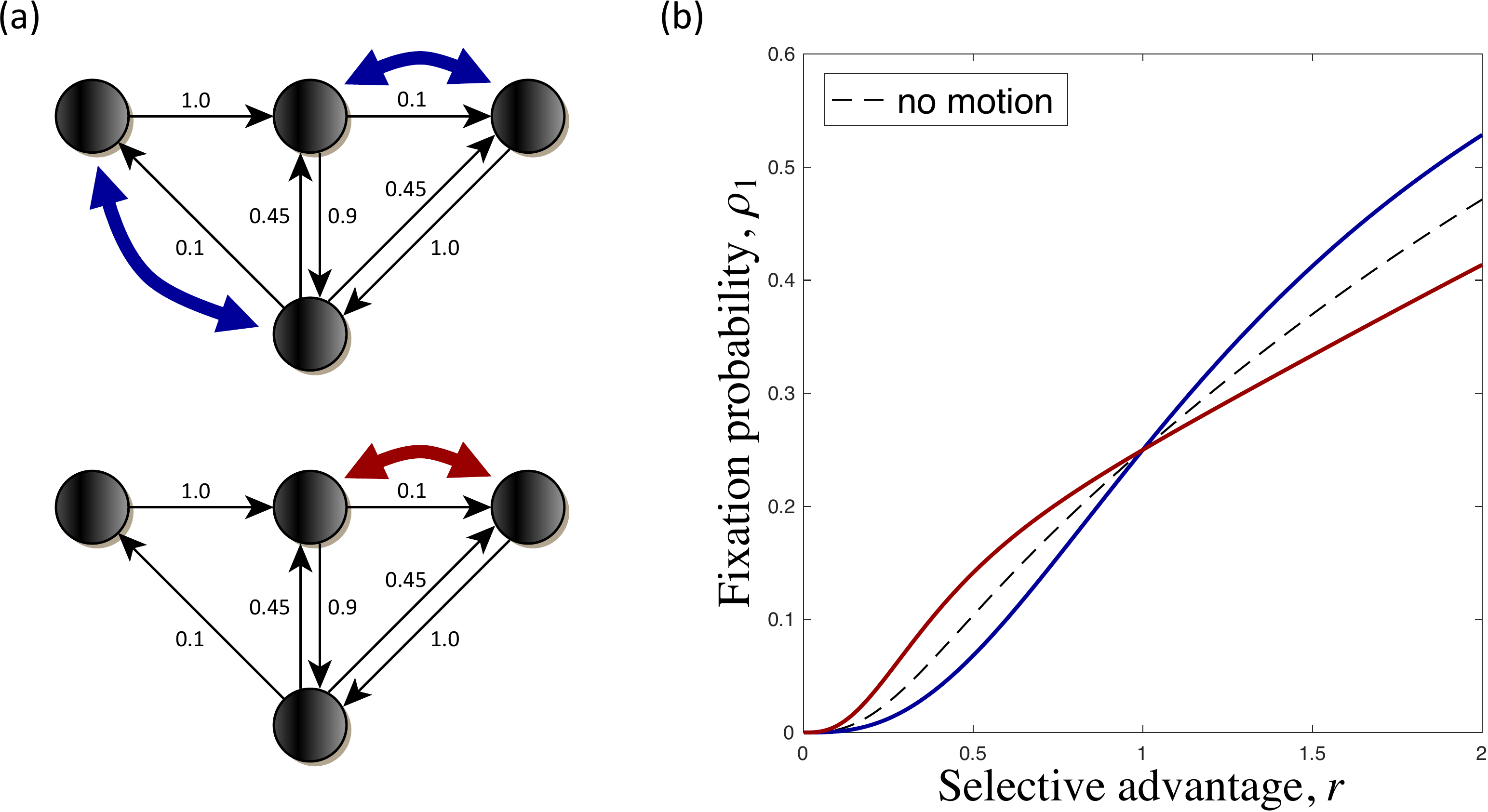}
\caption{\small Motion on a weighted graph. On general graphs, motion can exhibit behavior consistent with either amplification or suppression of selection. Panel (a) depicts a small weighted graph with two types of distinct motion, given in blue and red, respectively. For the blue motion, two swapping events occur simultaneously after each (birth-death) update: the individuals connected by blue arrows are swapped. While this motion involves multiple swaps, it is still an abstract shuffle. For the red motion, only one swapping event occurs after each reproduction event, and the individuals connected by the red arrow are swapped. The weights on the graph represent probabilistic dispersal patterns should an individual on the tail end of an edge be selected for reproduction. Relative to the process with no motion (dashed line), the motion in blue amplifies selection and the motion in red suppresses selection, shown in Panel (b). On the one- and two-dimensional lattices, in contrast, motion cannot amplify selection (even if more than one swap is allowed at each time step). Therefore, for more complicated dispersal patterns and weighted graphs, motion need not act uniformly as a suppressor as it does on unweighted one- and two-dimensional lattices.\label{fig:weightedGraph}}
\end{figure}
\begin{figure}
\centering
\includegraphics[scale=0.5]{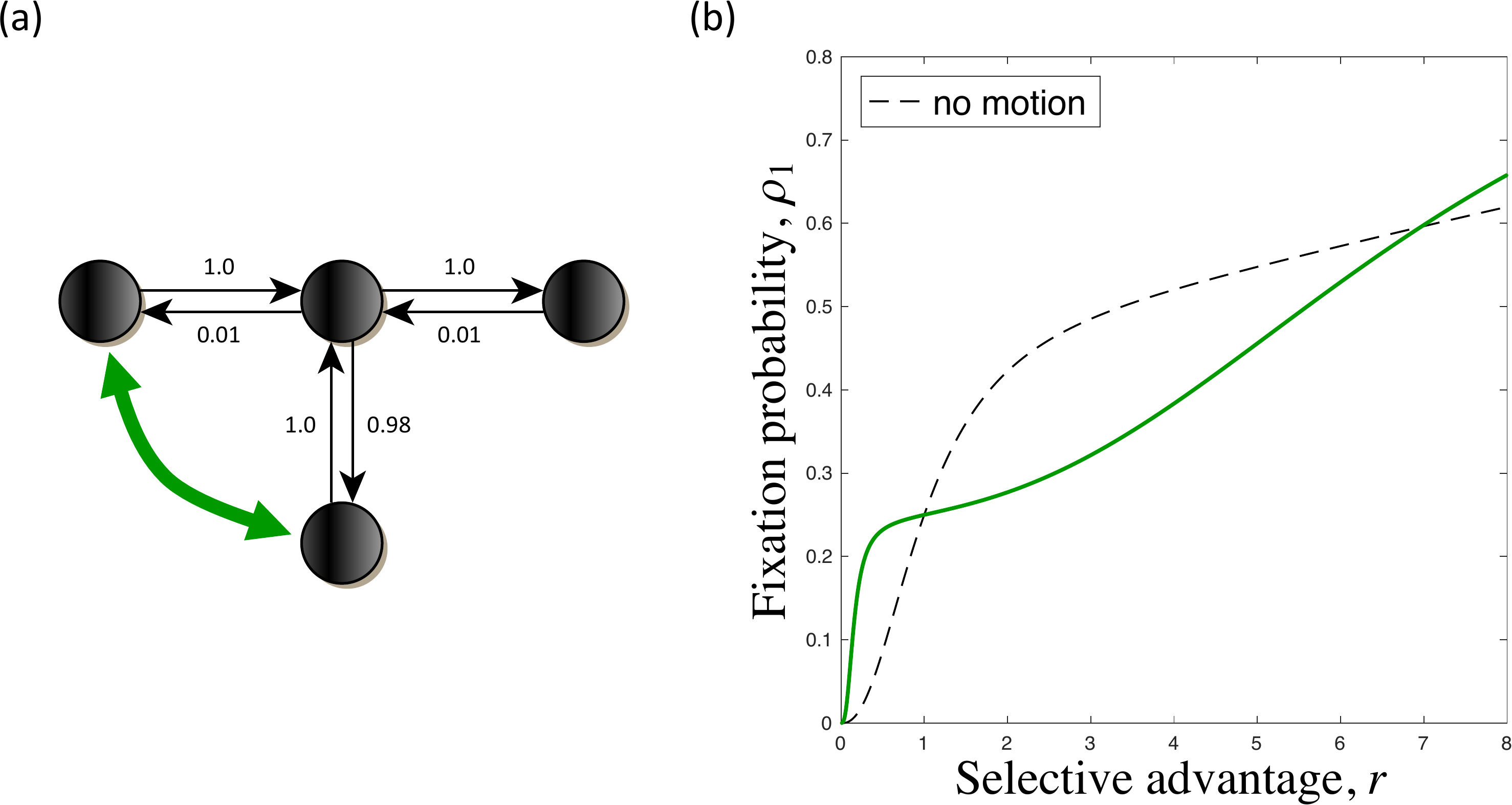}
\caption{\small Motion on a graph need not be a strict amplifier or a suppressor of selection. The swapping of types between each reproductive event is indicated by a green arrow in the graph depicted in Panel (a); dispersal probabilities for birth-death updating are indicated above each directed edge (black). Panel (b) shows the fixation probability of a randomly-placed mutant as a function of the mutant's relative fitness, $r$. Since this fixation probability coincides with the fixation probability when motion is absent for at least two (but still finitely many) distinct values of $r$, this motion on the graph is neither a strict suppressor nor an amplifier of selection, but plays both roles in different regions of $r$.\label{fig:notAmpSupp}}
\end{figure}

\subsection{Interpreting motion on graphs}
The study of motion in arbitrary graph-structured populations raises several issues related to how motion occurs in realistic populations. In our idealized case, individuals have physical limitations set forth by the vertices of a graph. A shuffle ensures that this population structure is preserved by motion. Following motion, all individuals fall somewhere on the original graph, and motion affects only the configuration of types within a fixed population structure.

Just as populations themselves need not have fixed size and structure, motion need not preserve the structure of a population. Even motion that does preserve population structure can take on abstract and complicated forms, which can be problematic if one is to use this framework to directly model motion within a natural population. Within the class of motion that does not depend on the individuals' types, there are two main subclasses: motion that depends on population structure and motion that does not.

Shuffles within a fixed population structure can depend on the population structure itself. For example, on the grid depicted in Fig. \ref{fig:motionOnCycle}(C), swaps occur between neighbors and not between more distant pairs of inhabitants. In a dense population, such as one composed of bacterial swarms \citep{darnton:BJ:2010,vanditmarsch:CR:2013} or tissue cells \citep{aktipis:NRC:2013}, the number of contacts between cells is roughly constant, and it is reasonable to consider a static, regular graph with swappings as cells are jumbled around by thermal or driven motion. Motion of this form, although not dependent on the types of the individual cells, is clearly coupled with population structure.

On the other hand, if motion does not depend on the population structure, then swaps like that of Fig. \ref{fig:notAmpSupp} are reasonable since they do not depend on the routes through which parents can propagate their offspring (i.e. links on the graph). While a swap of two distant individuals might seem contrived in this scenario, other shuffles resembling shifts or diffusions are more relevant--especially in large populations or those with periodic boundaries (such as a lattice). Structure-independent motion might arise from wind \citep{reich:CJFR:1994,yokomizo:TE:2009} or water flow \citep{perlekar:PRL:2010}.

Our focus has been on modeling motion on a fixed graph. As the density of individuals decreases or the rate of motion increases, however, it may be more realistic to consider a dynamic graph, in which individuals moving around on a background can have varying numbers of neighbors at different times. We touch upon this type of motion in the next section. At the same time, less is known about evolution in populations with dynamic structure, so even in the absence of motion further work must be done.

The cycle, while still an extremely simple population structure, is unique among evolutionary graphs with motion. The notion of shuffling on a static graph also captures more complicated motion when the underlying graph is a cycle. If every individual has exactly two neighbors and the population is rearranged by thermal excitation or fluid flow, for instance, then the requirement that everyone (in a connected population) has exactly two neighbors results once again in a cycle following the flow. As a result, motion such as a flow effectively induces a shuffle of the individual types within the cycle. This property is generally not true if one looks beyond the cycle--in fact, even for graphs in which all individuals have exactly three neighbors, a flow typically changes the topology of the graph. Therefore, our treatment of the cycle covers a variety of classes of motion beyond what the notion of shuffling could capture on more complicated graphs.

\section{Motion as dynamic topology}
We now consider evolutionary dynamics on a graph whose topology is not fixed. A general recipe for generating this class of motion is as follows: first, individuals are assigned initial positions on a manifold. Then, a graph is generated by essentially one of two methods; either a regular graph is generated, taking into account the metric on the underlying space (such that points closer together are more likely to share an edge), or a heterogeneous graph is generated by assigning an edge between each individual and all others within a metric ball of radius $R$ of that individual. The latter characterizes a random geometric graph \citep{penrose:OUP:2003}, and is our choice for generating topologies. Between Moran update steps, players are repositioned according to a map from the manifold to itself, and a new graph is formed according to the same rules. Thus, the map that moves the players effectively induces a new graph at every update step, which gives rise to an evolutionary process on a dynamic graph.

On a random geometric graph, whether the topology is static or dynamic, the fixation probability does not change when the update rule is birth-death. This property arises from the fact that for a sufficiently large random graph, it is known from the robust isothermal theorem \citep{adlam:SR:2014} that the fixation probability converges to that of the well-mixed model. We therefore focus our attention on the death-birth update rule.

In this initial study, we consider $N$ individuals distributed uniformly-at-random over the flat unit torus, with one individual being a mutant with relative fitness $r$. We choose a map from the torus to itself that has strong chaotic properties (see \citep{thiffeault:C:2003} and \textbf{Supporting Information}), to ensure that each individual has frequent opportunities to share an edge with any other individual in the population. With the map fixed, the only free parameter is the radius, $R$, of the ball that induces the graph at each update step. The results of our simulations are shown in Fig. \ref{fig:CATmap} for $N=49$ and various values of $R$. Despite the fact that the topology is completely new at each update step, suppression of selection due to motion is still seen when compared to a well-mixed population.

\begin{figure}
\centering
\includegraphics[width=0.8\columnwidth]{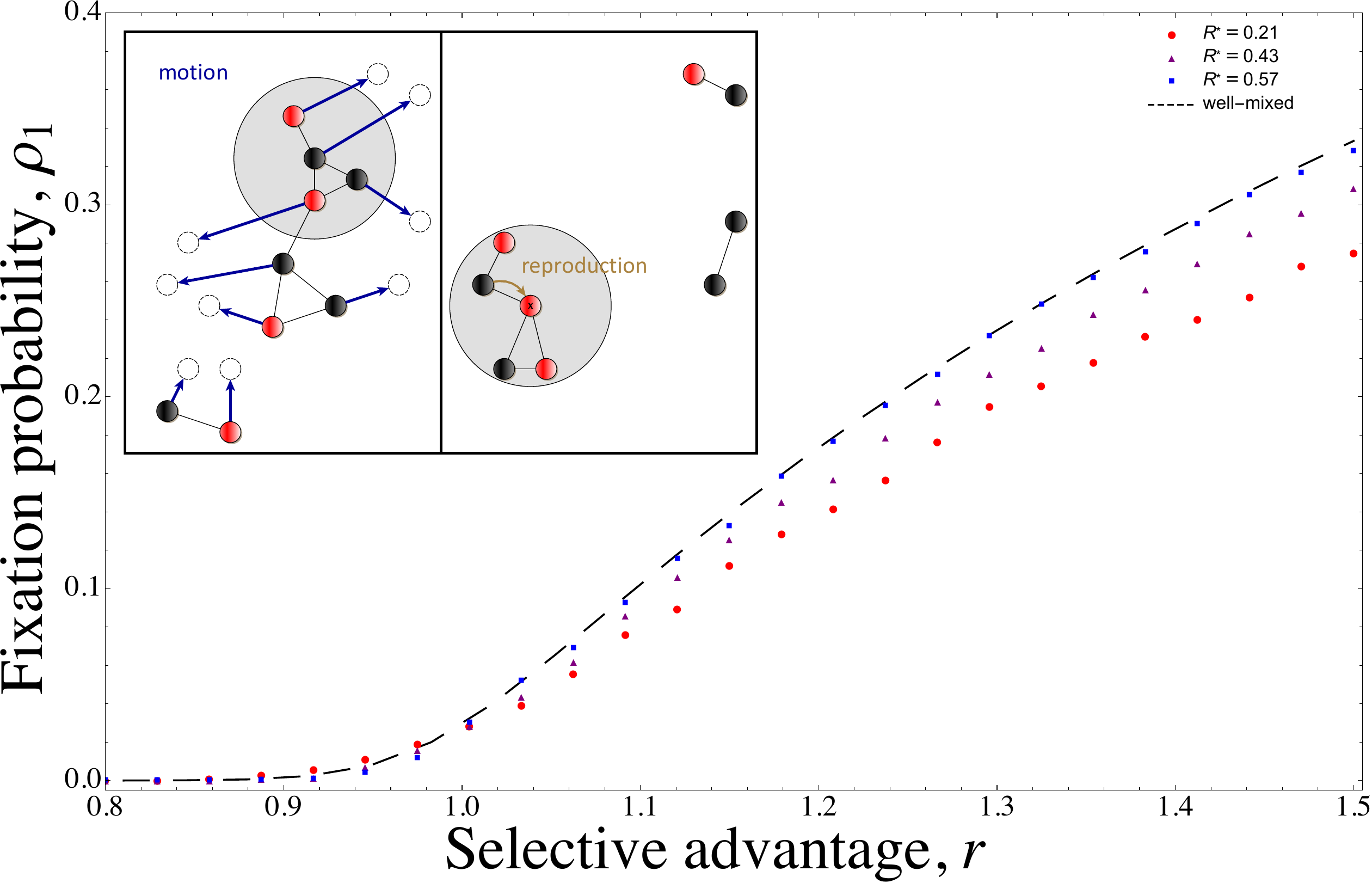}
\caption{Selection on a dynamic graph generated by a discrete map $f:\mathbb{R}^{2}\rightarrow\mathbb{R}^{2}$, schematized in top-left insert. At the end of a time step, the population is structured on a graph given by the rule that an individual $i$ has an edge with individual $j$ if the Euclidean distance between $i$ and $j$ is less than some radius, $R$, illustrated here as a grey disk. At the beginning of the next time step, this graph is then changed by moving the individuals according to the map, such that the position of individual $i$ becomes $f\left(i\right)$ (green arrows), and again applying the edge creation rule. The update rule ends with reproduction. The plots show the fixation probability as a function of the selective advantage of a single initial mutant, for a fixed map (see \textbf{Supporting Information}), $N=49$, and three values of $R$: $R=0.21$ (red circles), $R=0.43$ (purple triangles), $R=0.57$ (blue squares). The well-mixed result is given by the dashed black line.\label{fig:CATmap}}
\end{figure}

Extending this system to more diverse spaces and maps could be an intriguing avenue of study. In particular, it might provide a proper framework for understanding evolutionary game theory in a motion-dependent context. The size of the ball, $R$, is intimately related to both (i) the maximum distance that one player can signal their intention to another player and (ii) the maximum distance that a public good can diffuse from the player who generated it before it is advected away by an ambient flow or moving background. In this context, motion could be a determining factor in whether certain social behaviors or game strategies can flourish in a population.

\section{Discussion}
Population structures that amplify or suppress selection have received much attention in recent years \citep{adlam:PRSA:2015,jamiesonlane:JTB:2015,galanis:amplifiers:2015,giakkoupis:amplifiers:2016,goldberg:amplifiers:2016}. While a population structure is typically classified as an amplifier or suppressor by comparison to the unstructured population (the complete graph), here we classify a process with motion by comparing it to the same population in the absence of motion. In principle, one could also compare a population with motion to an unstructured population, but this comparison would make it difficult to disentangle the effect of motion from that of the spatial structure. This raises a critical point about the definition of natural selection. As we have seen, in the presence of motion, increasing growth rate does not necessarily increase selective advantage over other phenotypes. Selection acts on some combination of motility and growth rate. Understanding if there is a transformation under which a population with a certain motility and growth rate can be represented by a static population with a different growth rate, such that selection is only acting on one quantity, may be a useful future inquiry. 

Motion among individuals is a modeling component with meaningful biological motivation \citep{yokomizo:TE:2009,smaldino:TPB:2012,mcmanus:JEB:2012}. We find that motion on a static structure can act as amplifier or suppressor of natural selection relative to the same structure in the absence of motion. For the one-dimensional lattice, the cycle, we prove that any motion, however complicated, suppresses selection. This motion can even include dynamic rearrangement of the individuals in a continuous space, since the cycle is preserved under this motion as long as each individual finds two unique neighbors at every reproductive step.

Birth-death (BD) and death-birth (DB) are standard update rules, whose behavior can differ \citep{ohtsuki:Nature:2006,ohtsuki:JTB:2006}. Our study applies to both BD and DB updating, and it also extends to a stochastic mixture of BD and DB \citep{zukewich:PLoSONE:2013}. This mixture is a continuous interpolation between pure BD and pure DB updating and is biologically relevant since it accounts for birth and death events that do not occur in a fixed order. That any non-trivial motion suppresses selection on the cycle is a robust result that holds irrespective of whether BD updates are more likely than DB updates.

Apart from its effect on fixation probabilities, motion shortens the absorption time and the conditional fixation time. This acceleration arises from the fact that motion leads to fragmentation of clusters. On the cycle without motion, regardless of whether the update rule is BD or DB, the population can change only when reproduction events take place on the two boundaries that separate the two types. Motion engenders more boundaries between clusters, speeding up the dynamics. 

In any evolutionary graph, the number of possible events that increase or decrease the number of mutants is equal to the number of edges between mutant-occupied and resident-occupied vertices. The involved vertices are called the ``cut-set" \citep{lieberman:Nature:2005}. The class of $d$-neighbor swaps in general increase the size of the cut-set, increasing the frequency of events in which the mutant population grows or shrinks. For a given structure a motion can be defined that is ``evolutionarily optimal'' in one of two senses: an optimal motion can either maximize the fixation probability of an advantageous mutant, or it can minimize the conditional fixation time. These two optimal motions need not be the same. As we have seen for DB updating, speeding up the dynamics involves creating more clusters of the mutant type, which reduces the selective difference. Therefore, it could be difficult to find an optimal dispersal strategy that decreases the time for a new advantageous type to fix without removing its advantage.

All of this insight extends to dynamic graphs as well. Having a dynamic graph topology ensures that the composition of a local neighborhood is constantly changing, so that meaningful interactions take place much more frequently, and individuals of a given type are much more likely to encounter individuals of a different type than if the population structure were static. In this context, it is not surprising that the suppression of selection seen on the fixed one- and two-dimensional lattices extend to our example of a dynamic graph. 

Increasing the amount of motion between consecutive reproductive events on the lattice or on the cycle does not lead to the so-called ``well-mixed'' population, which is represented by a complete graph. As we have seen, the fixation probabilities in these settings are very different for DB updating. Motion on weighted and/or heterogeneous-degree graphs can lead to more nuanced results than what we have observed on the cycle and the lattice. Here, motion can act as amplifier or suppressor, and in some cases whether it is one or the other depend on the selective difference between the types.

Additionally, motion could be type-dependent. Motion could act differently on different types of individuals, which has been considered in previous work \citep{hamilton:Nature:1977}. Our formal proof for the cycle covers type-dependent motion and shows that it can only suppress selection. However, this result does not extend beyond the cycle, and it is easy to construct counterexamples. For instance, on any heterogeneous-degree graph, mutants that control their motion could gain an immediate advantage by moving themselves to particularly beneficial vertices with large numbers of neighbors. ``Intelligent'' types can perform motion that takes advantage of local benefits. Therefore ``motion with intent'' might be an interesting topic for future work.

\section*{Acknowledgments}
This work was supported by the Office of Naval Research (grant N00014-16-1-2914) and the John Templeton Foundation. The Program for Evolutionary Dynamics is supported in part by a gift from B Wu and Eric Larson. The authors thank Ben Adlam and Kamran Kaveh for helpful conversations.

\newpage

\setcounter{section}{0}
\setcounter{equation}{0}
\setcounter{figure}{0}
\renewcommand{\thesection}{SI.\arabic{section}}
\renewcommand{\theequation}{SI.\arabic{equation}}

\begin{center}
\textbf{Supporting Information}
\end{center}

\section{Motion suppresses selection on the cycle}
Let $M$ be the transition matrix for an evolutionary process on $S^{N}$, where $S=\left\{A,B\right\}$. For any state, $s$, let $\left| s\right|$ denote the number of mutants in $s$. We assume that for each $s\in S^{N}$, $M_{ss'}=0$ whenever $\left| s'\right| -\left| s\right|\neq 0,\pm 1$ (which holds for all of the processes we consider in the main text). This process defines a Markov chain on $S^{N}$, which we denote by $\left\{X_{n}\right\}_{n\geqslant 0}$, with the property that at most one mutant is added to or subtracted from the population at each step. For any non-absorbing state, $s\in S^{N}$, we define the forward bias by
\begin{linenomath}
\begin{subequations}
\begin{align}
p_{s;+} &:= \sum_{\substack{s'\in S^{N} \\ \left| s'\right| = \left| s\right| +1}} M_{ss'} ; \\
p_{s;-} &:= \sum_{\substack{s'\in S^{N} \\ \left| s'\right| = \left| s\right| -1}} M_{ss'} ; \\
\gamma_{s} &:= \frac{p_{s;+}}{p_{s;-}} .
\end{align}
\end{subequations}
\end{linenomath}

Let $\rho_{s}$ denote the probability of ending up in the all-mutant absorbing state when starting from state $s$. Recall that motion on the graph is modeled as a stochastic shuffle, $\mu\in\Delta\left(\mathfrak{S}_{N}\right)$, at each time step. (In general, $\mu$ can change at each update step.) If $\mu$ and $\nu$ are stochastic shuffles, then the sequence of $\mu$ followed by $\nu$, $\nu\circ\mu$, is again a shuffle with
\begin{linenomath}
\begin{align}
\nu\circ\mu\left(\pi\right) &= \sum_{\substack{\sigma,\tau\in\Delta\left(\mathfrak{S}_{N}\right) \\ \tau\circ\sigma =\pi}} \mu\left(\sigma\right)\nu\left(\tau\right) .
\end{align}
\end{linenomath}
In words, the probability of realizing $\pi$ as an effective two-shuffle outcome is the probability of all paths (sequences) of shuffles of length two that lead to $\pi$. This property extends to sequences of shuffles of any length. Thus, even motion that is iterated many times between each time step can always be captured by a stochastic shuffle. With this definition in place, the main technical lemma we need in order to show that motion suppresses selection is the following:
\begin{lemma}\label{lem:mainLemma}
Suppose that $P$ and $Q$ are Markov chains on $\left\{0,1,\dots ,m\right\}$ with the property that $\gamma_{\left| s\right|}^{P}\leqslant\gamma_{s}\leqslant\gamma_{\left| s\right|}^{Q}$ for each $s\in S^{N}$. Then, for every non-absorbing initial state, $s\in S^{N}$, we have $\rho_{\left| s\right|}^{P}\leqslant\rho_{s}\leqslant\rho_{\left| s\right|}^{Q}$.
\end{lemma}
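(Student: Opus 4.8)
The plan is to reduce everything to the classical theory of one-dimensional birth--death chains and then transfer the resulting bounds back to the process on $S^{N}$ via a super/subharmonic comparison. First I would record the standard fact that, for a birth--death chain on $\left\{0,1,\dots,N\right\}$ with absorbing endpoints $0$ and $N$, the probability $\rho^{Q}_{k}$ of being absorbed at $N$ starting from $k$ depends only on the forward biases $\gamma^{Q}_{1},\dots,\gamma^{Q}_{N-1}$ (the holding probabilities cancel in the standard recursion), is given in closed form as a ratio of partial sums of the products $\prod_{i}1/\gamma^{Q}_{i}$, and is strictly increasing in $k$ provided $0<\gamma^{Q}_{k}<\infty$. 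Writing $\Delta_{k}:=\rho^{Q}_{k}-\rho^{Q}_{k-1}>0$, the harmonic equation for $Q$ yields the one-step recursion $\Delta_{k+1}=\Delta_{k}/\gamma^{Q}_{k}$, which is the only structural fact about $Q$ that the argument will use.

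The crucial step is to lift $\rho^{Q}$ to the full configuration space by setting $h^{Q}(s):=\rho^{Q}_{\left| s\right|}$ and to show that $h^{Q}$ is superharmonic for the true transition matrix $M$. Using $M_{ss'}=0$ unless $\left| s'\right|-\left| s\right|\in\left\{0,\pm1\right\}$ together with the definitions of $p_{s;\pm}$ and $\gamma_{s}=p_{s;+}/p_{s;-}$, a direct computation gives, with $k=\left| s\right|$,
\begin{linenomath}
\begin{align}
\left(Mh^{Q}\right)(s)-h^{Q}(s) &= p_{s;+}\bigl(\rho^{Q}_{k+1}-\rho^{Q}_{k}\bigr)-p_{s;-}\bigl(\rho^{Q}_{k}-\rho^{Q}_{k-1}\bigr) \\
&= p_{s;-}\,\Delta_{k}\left(\frac{\gamma_{s}}{\gamma^{Q}_{k}}-1\right) ,
\end{align}
\end{linenomath}
where the second line substitutes $\rho^{Q}_{k+1}-\rho^{Q}_{k}=\Delta_{k}/\gamma^{Q}_{k}$. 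Since $\Delta_{k}>0$ and $p_{s;-}\geqslant0$, the hypothesis $\gamma_{s}\leqslant\gamma^{Q}_{\left| s\right|}$ forces this quantity to be $\leqslant0$, so $h^{Q}$ is superharmonic; the identical calculation with $P$, invoking $\gamma^{P}_{\left| s\right|}\leqslant\gamma_{s}$, shows $h^{P}(s):=\rho^{P}_{\left| s\right|}$ is subharmonic. This is precisely where the two-sided hypothesis on the biases is consumed, and it is what lets the fully state-dependent $\gamma_{s}$ be controlled by the single-index biases $\gamma^{Q}_{k}$ and $\gamma^{P}_{k}$.

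To finish I would run the usual supermartingale argument. Both $h^{Q}$ and the true absorption probability $\rho$ agree on the two absorbing states ($1$ at the all-$A$ state, $0$ at the all-$B$ state), and $\rho$ is harmonic on non-absorbing states, so $g:=h^{Q}-\rho$ satisfies $\left(Mg\right)(s)\leqslant g(s)$ in the interior with $g\equiv0$ on the boundary. Because the state space is finite and both absorbing states are reachable, the chain $\left\{X_{n}\right\}$ is absorbed almost surely, $g\left(X_{n}\right)$ is a supermartingale up to the absorption time, and optional stopping gives $g(s)\geqslant\mathbb{E}\left[g\left(X_{\tau}\right)\right]=0$, i.e. $\rho_{s}\leqslant\rho^{Q}_{\left| s\right|}$. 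The subharmonicity of $h^{P}$ yields $\rho^{P}_{\left| s\right|}\leqslant\rho_{s}$ by the symmetric argument, which together establish the claim.

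The main obstacle I anticipate is conceptual rather than computational: the projected process $\left| X_{n}\right|$ is generally \emph{not} Markov, since $\gamma_{s}$ depends on the entire configuration and not merely on $\left| s\right|$, so one cannot compare $P$ and $Q$ to a naively ``lumped'' version of $M$ through a direct coupling of mutant counts. The harmonic-function device circumvents this exactly, replacing a coupling by a one-step inequality that only ever refers to $\gamma_{s}$ at the current state. The sole technical point requiring care is the strict monotonicity $\Delta_{k}>0$, which guarantees that the sign of the bracket is governed solely by $\gamma_{s}/\gamma^{Q}_{k}-1$; this holds provided the comparison chains are genuinely irreducible on the interior, i.e. $0<\gamma^{P}_{k},\gamma^{Q}_{k}<\infty$ for $1\leqslant k\leqslant N-1$.
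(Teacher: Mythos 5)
Your proposal is correct and takes essentially the same route as the paper's own proof: both lift the comparison chain's absorption probabilities to the configuration space (your $h^{Q}(s)=\rho^{Q}_{\left|s\right|}$ is exactly the paper's harmonic function $f_{Q}\left(\left|s\right|\right)/f_{Q}\left(N\right)$), use the bias hypothesis $\gamma^{P}_{\left|s\right|}\leqslant\gamma_{s}\leqslant\gamma^{Q}_{\left|s\right|}$ to turn it into a sub/supermartingale for the full chain, and finish with Doob's optional stopping theorem. The only difference is cosmetic: the paper first normalizes away lazy transitions and applies optional stopping to $f_{P}\left(\left|X_{n}\right|\right)$ directly, whereas your difference formulation $\Delta_{k+1}=\Delta_{k}/\gamma^{Q}_{k}$ handles holding and same-level moves automatically and applies optional stopping to $h^{Q}-\rho$.
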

\begin{proof}
By the recurrence relation for fixation probabilities,
\begin{linenomath}
\begin{align}
\sum_{s'\neq s} M_{ss'} \rho_{s} &= \sum_{s'\neq s} M_{ss'} \rho_{s'} .
\end{align}
\end{linenomath}
Therefore, whenever $s$ is not an absorbing state and $s'\neq s$, replacing $M_{ss'}$ by $\frac{1}{1-M_{ss}}M_{ss'}$ does not change these fixation probabilities; we may assume then that the probability of staying put in a given state is $0$ in any Markov chain for which we care about fixation probabilities. In other words, we have
\begin{linenomath}
\begin{subequations}
\begin{align}
p_{s;+} &= \frac{\gamma_{s}}{1+\gamma_{s}} ; \\
p_{s;-} &= \frac{1}{1+\gamma_{s}} .
\end{align}
\end{subequations}
\end{linenomath}
Consider the function $f_{P}:\left\{0,1,\dots ,N\right\}\rightarrow\mathbb{R}$ defined by
\begin{linenomath}
\begin{align}
f_{P}\left(m\right) &= \begin{cases}0 & m=0 , \\ 1+\frac{1}{\gamma_{1}^{P}}+\frac{1}{\gamma_{1}^{P}\gamma_{2}^{P}}+\cdots +\frac{1}{\gamma_{1}^{P}\cdots\gamma_{m-1}^{P}} & m>0 .\end{cases}
\end{align}
\end{linenomath}
Since $f_{P}\left(m+1\right) =f_{P}\left(m\right) +\frac{1}{\gamma_{1}^{P}\cdots\gamma_{m}^{P}}$ for each $m<N-1$, we have
\begin{linenomath}
\begin{align}
\mathbf{E}\left[ f_{P}\left(X_{n+1}^{P}\right)\ |\ X_{n}^{P}=m \right] &= \frac{\gamma_{m}^{P}}{1+\gamma_{m}^{P}} f_{P}\left(m+1\right) + \frac{1}{1+\gamma_{m}^{P}} f_{P}\left(m-1\right) \nonumber \\
&= f_{P}\left(m\right) .
\end{align}
\end{linenomath}
Therefore, by Doob's optional stopping theorem, we see that $\rho_{m}^{P}=f_{P}\left(m\right) /f_{P}\left(N\right)$. Now, since
\begin{linenomath}
\begin{align}
\mathbf{E}\left[ f_{P}\left(\left| X_{n+1}\right|\right)\ |\ X_{n}=s \right] &= \frac{\gamma_{s}}{1+\gamma_{s}} f_{P}\left(\left| s\right| +1\right) + \frac{1}{1+\gamma_{s}} f_{P}\left(\left| s\right| -1\right) \nonumber \\
&\geqslant \frac{\gamma_{\left| s\right|}^{P}}{1+\gamma_{\left| s\right|}^{P}} f_{P}\left(\left| s\right| +1\right) + \frac{1}{1+\gamma_{\left| s\right|}^{P}} f_{P}\left(\left| s\right| -1\right) \nonumber \\
&= f_{P}\left(\left| s\right|\right) ,
\end{align}
\end{linenomath}
it follows once again from Doob's optional stopping theorem that $\rho_{s}\geqslant f_{P}\left(\left| s\right|\right) /f_{P}\left(N\right) =\rho_{\left| s\right|}^{P}$. An analogous argument with the inequalities reversed gives $\rho_{s}\leqslant\rho_{\left| s\right|}^{Q}$, which completes the proof.
\end{proof}

\subsection{Mixed BD and DB updating}
Consider a mixed update rule in which, at each time step, there is a DB update with probability $\delta$ and a BD update with probability $1-\delta$. For each state, $s$, we have $p_{s;+}^{\delta}=\left(1-\delta\right) p_{s;+}^{\textrm{BD}}+\delta p_{s;+}^{\textrm{DB}}$ and $p_{s;-}^{\delta}=\left(1-\delta\right) p_{s;-}^{\textrm{BD}}+\delta p_{s;-}^{\textrm{DB}}$, so the forward bias is
\begin{linenomath}
\begin{align}
\gamma_{s}^{\delta} &= \frac{\left(1-\delta\right) p_{s;+}^{\textrm{BD}}+\delta p_{s;+}^{\textrm{DB}}}{\left(1-\delta\right) p_{s;-}^{\textrm{BD}}+\delta p_{s;-}^{\textrm{DB}}} .
\end{align}
\end{linenomath}
If $1<\left| s\right| <N-1$ and $s'$ is chosen so that $\left| s'\right| =\left| s\right|$ but all mutants in $s'$ are in a single cluster, then $x_{A}\left(s'\right) =x_{B}\left(s'\right) =0$ and $y_{A}\left(s'\right) =y_{B}\left(s'\right) =1$; a simple calculation then gives $\gamma_{s'}^{\delta}=r$. If $r\geqslant 1$, then
\begin{linenomath}
\begin{align}
\gamma_{s}^{\delta} &= \frac{\left(1-\delta\right)\left(\frac{r\left(x_{A}+y_{A}\right)}{m r + N-m}\right) + \delta\left(\frac{x_{B} + \frac{2r}{1+r} y_{B}}{N-2x_{A}-2y_{A}}\right)}{\left(1-\delta\right)\left(\frac{x_{A}+y_{A}}{m r + N-m}\right) + \delta\left(\frac{x_{A} + \frac{2}{1+r} y_{A}}{N-2x_{A}-2y_{A}}\right)} \nonumber \\
&\leqslant \frac{\left(1-\delta\right)\left(\frac{r\left(x_{A}+y_{A}\right)}{m r + N-m}\right) + \delta\left(\frac{2r}{1+r}\right)\left(\frac{x_{B} + y_{B}}{N-2x_{A}-2y_{A}}\right)}{\left(1-\delta\right)\left(\frac{x_{A}+y_{A}}{m r + N-m}\right) + \delta\left(\frac{2}{1+r}\right)\left(\frac{x_{A} + y_{A}}{N-2x_{A}-2y_{A}}\right)} \nonumber \\
&= r
\end{align}
\end{linenomath}
since $x_{A}+y_{A}=x_{B}+y_{B}$ on the cycle. Thus, $\gamma_{s}^{\delta}\leqslant\gamma_{s'}^{\delta}$ when $r\geqslant 1$. Similarly, we see that $\gamma_{s}^{\delta}\geqslant 1$ when $r\geqslant 1$. When $r\leqslant 1$, we get the opposite inequalities, namely $r=\gamma_{s'}^{\delta}\leqslant\gamma_{s}^{\delta}\leqslant 1$. Therefore, for any $r>0$,
\begin{linenomath}
\begin{align}\label{sieq:gammaInequality}
\min\left\{\gamma_{s'}^{\delta},1\right\} \leqslant \gamma_{s}^{\delta} \leqslant \max\left\{\gamma_{s'}^{\delta},1\right\} .
\end{align}
\end{linenomath}
The biases for the two remaining cases, $\left| s\right| =1$ and $\left| s\right| =N-1$, are
\begin{linenomath}
\begin{align}
\gamma_{s}^{\delta} &= \begin{cases}\frac{\left(1-\delta\right)\left(\frac{r}{m r + N-m}\right) + \delta\left(\frac{\frac{2r}{1+r}}{N-2}\right)}{\left(1-\delta\right)\left(\frac{1}{m r + N-m}\right) + \delta\left(\frac{1}{N-2}\right)} & \left| s\right| =1 , \\ \frac{\left(1-\delta\right)\left(\frac{r}{m r + N-m}\right) + \delta\left(\frac{1}{N-2}\right)}{\left(1-\delta\right)\left(\frac{1}{m r + N-m}\right) + \delta\left(\frac{\frac{2}{1+r}}{N-2}\right)} & \left| s\right| =N-1 .\end{cases}
\end{align}
\end{linenomath}
While these need not be equal to $r$, Eq. (\ref{sieq:gammaInequality}) still holds since $s=s'$ in these two cases.

It follows at once from Eq. (\ref{sieq:gammaInequality}) and Lemma \ref{lem:mainLemma} that motion on the cycle suppresses selection under mixed BD and DB updating since any such (non-trivial) motion can disrupt clusters. In particular, this suppression holds for $\delta =0$ (pure BD updating) and $\delta =1$ (pure DB updating).

\section{Simulations on one- and two-dimensional lattices}
\subsection{BD updating}
We simulated BD updating on the cycle $10^6$ times for $N=25$, thirty values of $r$ evenly distributed on $[0.5, 1.5]$, and different values of $d$. At first, we constrain ourselves to one shuffle per update, $J=1$. As predicted, the fixation probability is identical to the well-mixed result, but the times to fixation and absorption can be different. Generally speaking, when the motion is independent of the state and position of the individuals, the time to absorption or fixation is longer than in the well-mixed case. However, if $\mu$ depends on the state rather than being drawn from the uniform distribution, we can provide an algorithm for the shuffles that provide the fastest dynamics, even faster than the well-mixed case. The fastest realizations for BD updating are those that minimize $p^0$ at every update. It is easily seen that the desired $\mu(s)$ maximizes $c$ for all $s$, that is, creates as many clusters as possible. This can easily be accomplished by an algorithm that ensures no two mutants are adjacent when $m\leq N/2$, and no two resident types are adjacent when $m>N/2$. We plot $p^0$ versus $m$ for this strategy and the well-mixed case in Fig. \ref{fig:fasterThanWM}. Because the biases are the same in the two cases and $p^0$ for the well-mixed case dominates $p^0$ for the algorithm, we actually expect the algorithm on the cluster to be faster than dynamics in the unstructured, well-mixed case, (see Fig. \ref{fig:ALGORITHM}). 

\subsection{DB updating}
Similarly, we simulated DB updating for the same parameters and iterations described in the previous section. The behavior of the conditional fixation time is shown in the first panel of Fig. \ref{fig:DBprobplot}. Quantitative differences arise due to the fact that the forward bias is different under DB and BD updating when motion is present, as described in the text. This difference in bias implies that the probability that an invading mutant will fix is different under DB updating when compared to the well-mixed and BD cases. The probability that the mutant will fix is plotted in the second panel of Fig. \ref{fig:DBprobplot}. We find that in the presence of random swapping, the forces of selection are inhibited; that is, the fixation probability with motion is dominated by the fixation probability without motion.

In considering the lattice, all of the properties and parameters described in this and the previous section hold. When individuals are chosen to swap, the distance $d$ is measured in units of lattice steps, or in the $\mathbf{L}^1$ (taxicab) norm. The probability that the mutant will fix is plotted in Fig. \ref{fig:DBprobplotLAT}.

\subsection{Multiple shuffles per update}
When $J$ is large, it is more fundamental to examine the role of various quantities as the percentage of swapped sites, $J/N$, is varied; if $d \ll N$, it can take a very large number of swaps to accomplish the same relative mixing as a small number of swaps when $d = \mathcal{O}(N)$. We fix values of $d$ and average over ten values of $N \in [25, 50]$ in Fig. \ref{fig:DBringJplot}, which examines the dependence of the fixation probability and conditional fixation time under DB updating on the amount of random swapping and the distance. As discussed in the text, the nature of the swapping implies that there is a critical value of $J/N$ beyond which there is no further effect in the average on fixation probabilities and times; this corresponds to the percolation limit. Depending on the value of $d$, we estimate that between $10$ and $20$ percent of individuals in our population must be swapped before this completely-randomized condition is met, leading to the fastest dynamics and lowest fixation probabilities.

\section{Weighted graphs}
We now illustrate different types of motion on weighted graphs. For simplicity, we assume that the population is updated according to a BD rule; similar behavior can be observed DB updating. Suppose that $\Gamma$ is a weighted graph with $N$ vertices. To each pair of vertices, $i$ and $j$, $\Gamma$ associates a weight, $\Gamma_{ij}\geqslant 0$. BD updating on $\Gamma$ (without motion) is defined as follows \citep{lieberman:Nature:2005}: First, an individual is chosen for reproduction with probability proportional to (relative) fitness. If $i$ is chosen to reproduce, then the offspring of $i$ replaces $j$ (who dies) with probability proportional to $\Gamma_{ij}$. Without a loss of generality, we may assume that $\sum_{j=1}^{N}\Gamma_{ij}=1$ for each $i$, meaning $\Gamma_{ij}$ is the probability of dispersal to location $j$ from location $i$.

Let $M$ be the transition matrix on $S^{N}$ induced by this process, where $S=\left\{A,B\right\}$. If $\mu\in\Delta\left(\mathfrak{S}_{N}\right)$ is the shuffle used before each BD update, then the transition matrix for BD updating on $\Gamma$ with motion induced by $\mu$ is defined by
\begin{linenomath}
\begin{align}
M_{ss'}^{\mu} &= \sum_{\pi\in\mathfrak{S}_{N}} \mu\left(\pi\right) M_{\pi\left(s\right) s'} .
\end{align}
\end{linenomath}
Let $\rho_{s}$ denote the probability of ending in the all-$A$ absorbing state when starting in state $s$. This fixation probability satisfies the recurrence relation,
\begin{linenomath}
\begin{align}
\rho_{s} &= \sum_{s'\in S^{N}} M_{ss'}^{\mu} \rho_{s'} ,
\end{align}
\end{linenomath}
with the boundary conditions $\rho_{\left(A,A,\dots ,A\right)}=1$ and $\rho_{\left(B,B,\dots ,B\right)}=0$. For small $N$, one can solve for these fixation probabilities directly. Doing so gives Figs. \ref{fig:weightedGraph} and \ref{fig:notAmpSupp} (where, by ``fixation probability," we mean the average fixation probability over all starting positions of the mutant; see Eq. (\ref{eq:averageFP})).

\section{Dynamic graphs}
We simulated the Moran process on a dynamic geometric graph induced by metric balls of radius $R$ on $N$ players, with the particular values $N=49$ and $R=1.5/7, \ 3/7, \ 4.5/7$. The values were picked based on conforming to a ``thermodynamic'' scaling $R=R^{*}/\sqrt{N}$ and approximating the values of $R$ such that the limit of a very dense (not quite complete) network was realized ($R=4.5/7$), as well as a much sparser network ($R=1.5/7$). This value was chosen so that the chance of a player being completely isolated (sharing no edges) was small. This probability cannot be eliminated on a finite dynamical graph, so in our simulations we have stipulated that any realization that passes through a transient state with an isolated player is discarded to avoid the possible bias induced by ``sparing'' death to an individual who cannot be replaced by any neighbors. We gleaned our results from $25,000$ realizations without isolated players for $25$ values of $r\in\left[0.8,1.5\right]$. 

The ambient space that was used was the flat unit torus, that is, $\left[0,1\right]^{2}$ with edges identified. In principle, any automorphism of the torus could be used to rearrange the players, and the study of the effect of different classes of automorphisms would be very interesting. As an initial study, however, we chose a very simple map with chaotic mixing properties:
\begin{linenomath}
\begin{align}
\begin{pmatrix}
x_{t+1} \\
y_{t+1}
\end{pmatrix} =
\begin{pmatrix}
2 & 1 \\
1 & 1
\end{pmatrix} 
\begin{pmatrix}
x_{t} \\
y_{t}
\end{pmatrix}
+ \varepsilon \begin{pmatrix}
\sin\left(2 \pi x_{t}\right) \\
\sin\left(2 \pi x_{t}\right) 
\end{pmatrix}
\end{align}
\end{linenomath}
that is, V.I. Arnold's CAT (continuous automorphism of the torus) map with a small perturbation term ($\varepsilon\ll 1$) to avoid the possibility of a stable fixed point for some initial conditions. The perturbation has no significant effect on the dynamics, which can be checked by omitting it and simulating the dynamics, taking great care to avoid the possibility of a fixed point ``trapping'' all the individuals. The map always has a positive Lyapunov exponent equal to the golden ratio, $\phi$, and is therefore chaotic for all initial conditions. There are no complicated regions of phase space such as elliptic islands that would be expected in genuine continuous fluid flows, so the map is in some sense very simple among the set of chaotic toral automorphisms. Its ergodic properties ensure that if the simulation is run forever (beyond the expected time to fixation of either species), each individual meets each other individual infinitely often. Such a map ensures that we are examining a very dynamic structure, with minimal possible correlation between edges between the adjacency matrix of two consecutive time steps. 

The results are shown in Fig. \ref{fig:CATmap}. As the connectivity of the graph decreases (decreasing $R$) from a complete graph, we observe suppression of selection due to the map; for sufficiently large $N$ a random geometric graph is isothermal, so in the absence of the map the fixation probability is close to that of a well-mixed population.

\newpage

\begin{center}
\textbf{Supporting Figures}
\end{center}

\setcounter{figure}{0}
\renewcommand{\thefigure}{SI.\arabic{figure}}

\begin{figure}[ht]
\centering
\includegraphics[width=0.8\columnwidth]{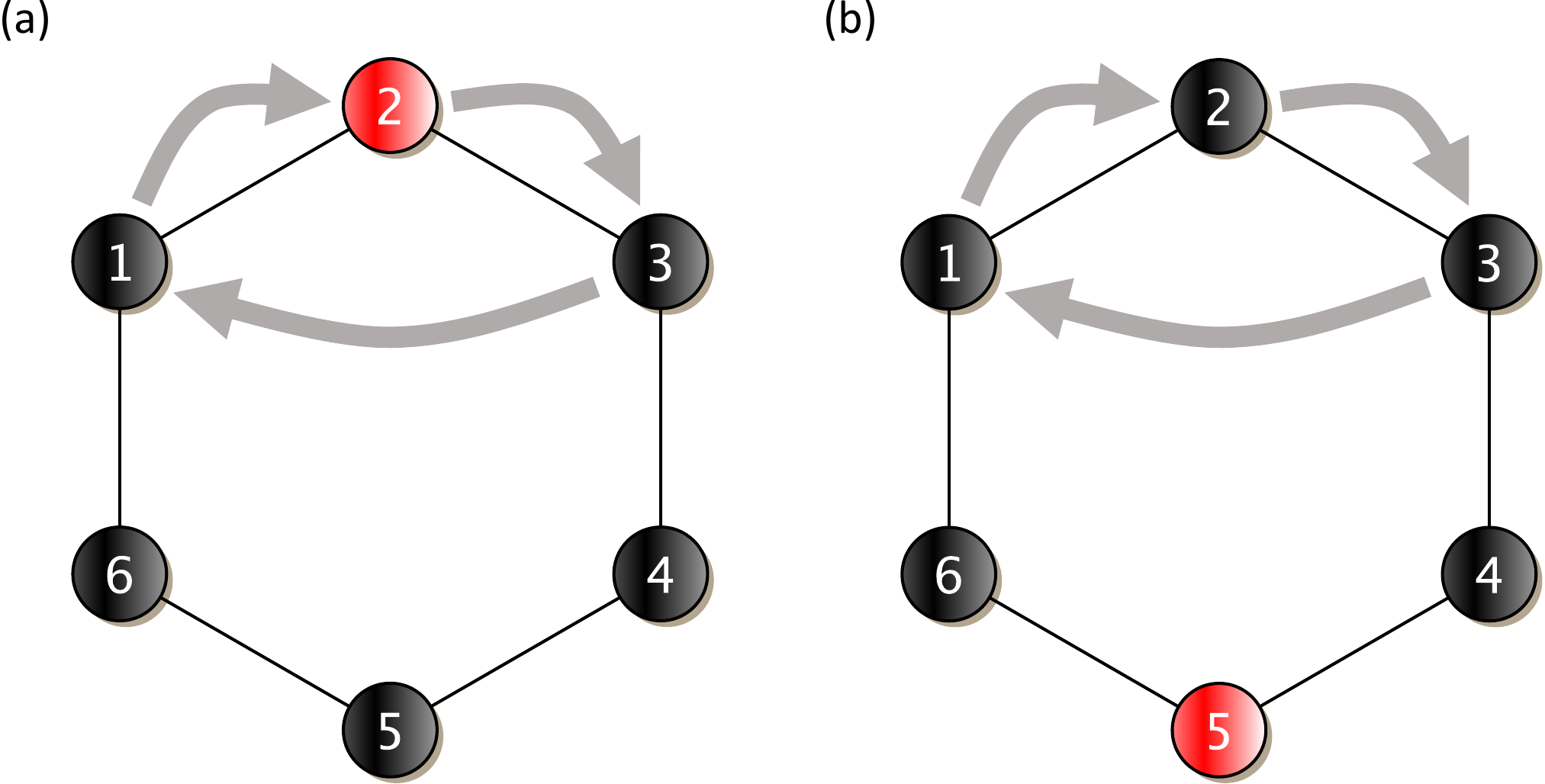}
\caption{\small Heterogeneity induced by motion. In the absence of motion, the fixation probability of a mutant at vertex $2$ is the same as that of a mutant at vertex $5$ since the cycle is homogeneous. Once motion is introduced, these fixation probabilities need not coincide. As an example, we can consider deterministic motion in which individual $2$ acquires the type of individual $1$, $3$ acquires the type of $2$, and $1$ acquires the type of $3$. As an element of the symmetric group, $\mathfrak{S}_{6}$, this motion is represented by the $3$-cycle $\left(132\right)$. For death-birth (DB) updating in which the mutant has fitness $r=2$ relative to the resident, the fixation probability is $\approx 0.3351$ in (a) and $\approx 0.3703$ in (b), both rounded to four digits beyond the decimal point. Therefore, even on a homogeneous graph, motion can introduce heterogeneity.\label{fig:heterogeneousMotion}}
\end{figure}
\begin{figure}[ht]
\centering
\includegraphics[width=0.8\columnwidth]{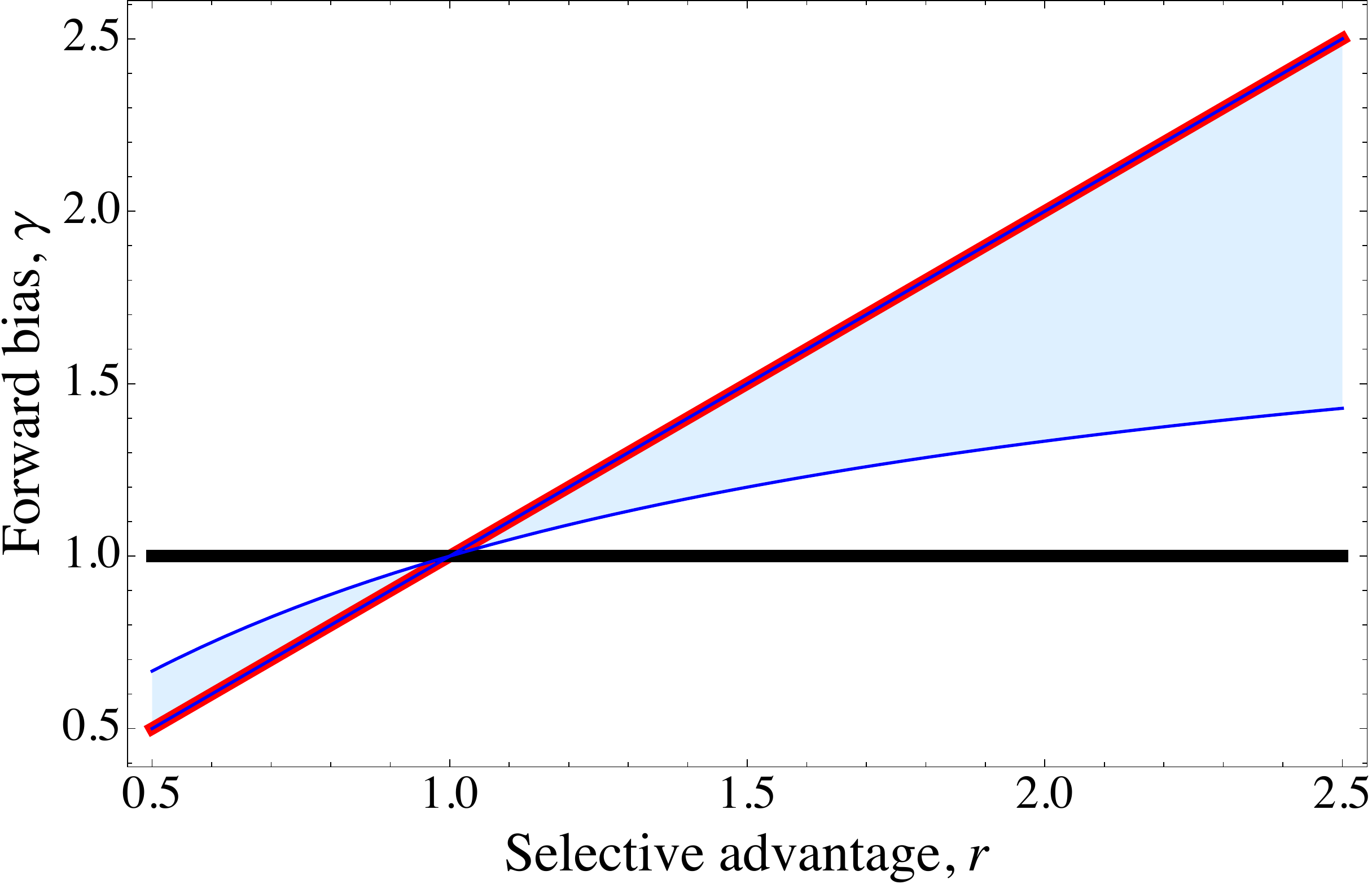}
\caption{\small Forward bias $\gamma$ for DB updating is decreased by motion. The example is that of five mutants arranged in different configurations on any cycle with $N \geq 15$. Thick black and red lines show the neutral limit $\gamma=1$ and the well-mixed limit $\gamma=r$, respectively. The blue lines give the maximum and minimum possible $\gamma$, realized when $x_B=0; \ y_B=1$ (a single cluster of mutants) and $x_B=5; \ y_B=0$ (five mutants isolated amongst large clusters of resident type), respectively. For both these extremes, $x_A=0$, indicating that the resident types are always in larger clusters. The shaded light blue region demarcates some other possible values of $\gamma$ that can be realized when the five mutants are dispersed in other configurations. \label{fig:gamCycPlot}}
\end{figure}
\begin{figure}[ht]
\includegraphics[width=0.8\columnwidth]{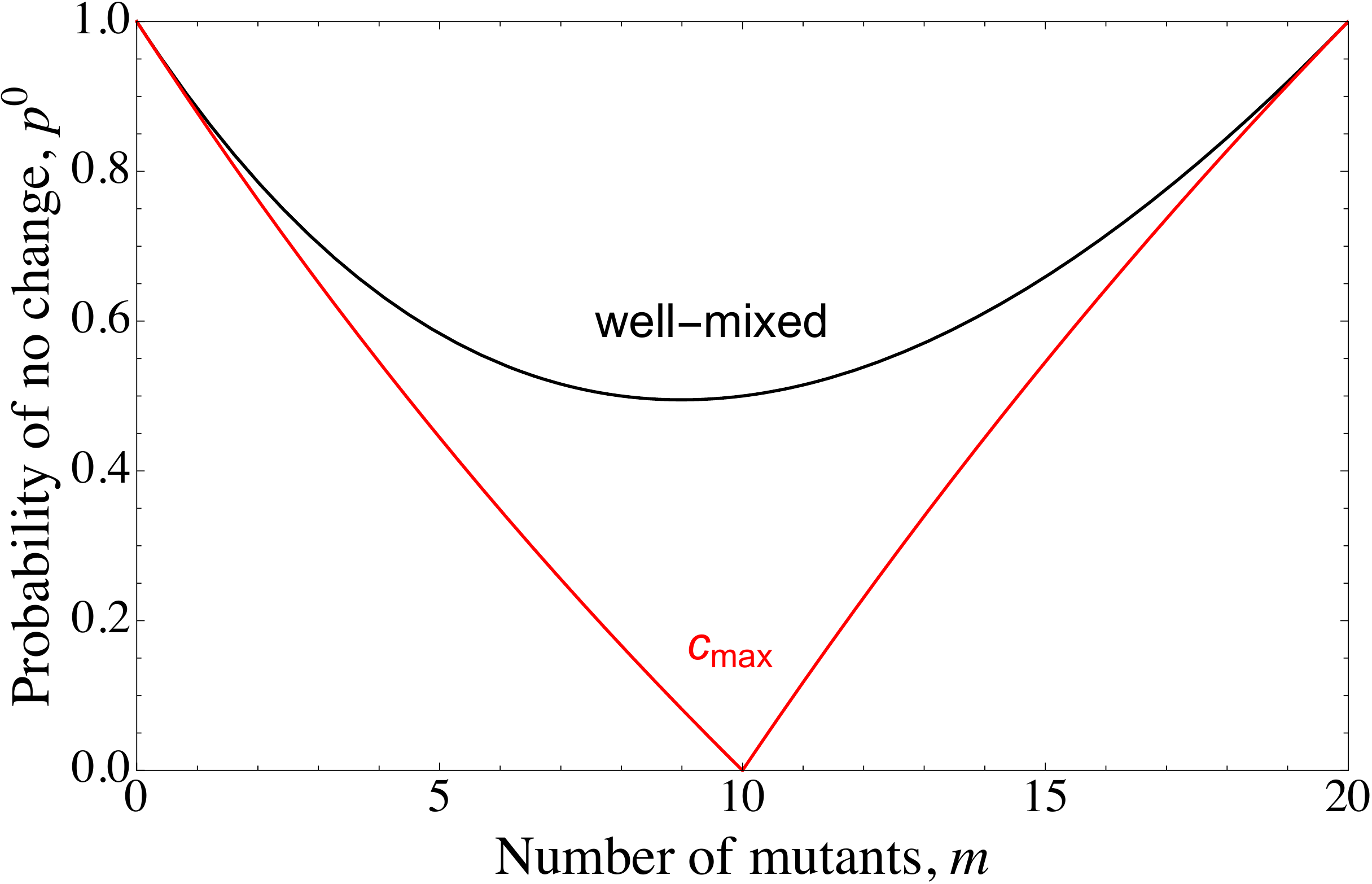}
\caption{\small Motion increases the frequency of birth and death events that change the abundance of types. The probability to neither create nor absorb a mutant under BD updating, $p^0$, is plotted versus $m$, for $N=20$. The black curve demonstrates the well-mixed case, whereas the red curve demonstrates the fastest possible dynamics, given by an algorithm that maximizes $c$ for every value of $m$.\label{fig:fasterThanWM}}
\end{figure}
\begin{figure}[ht]
\includegraphics[width=0.8\columnwidth]{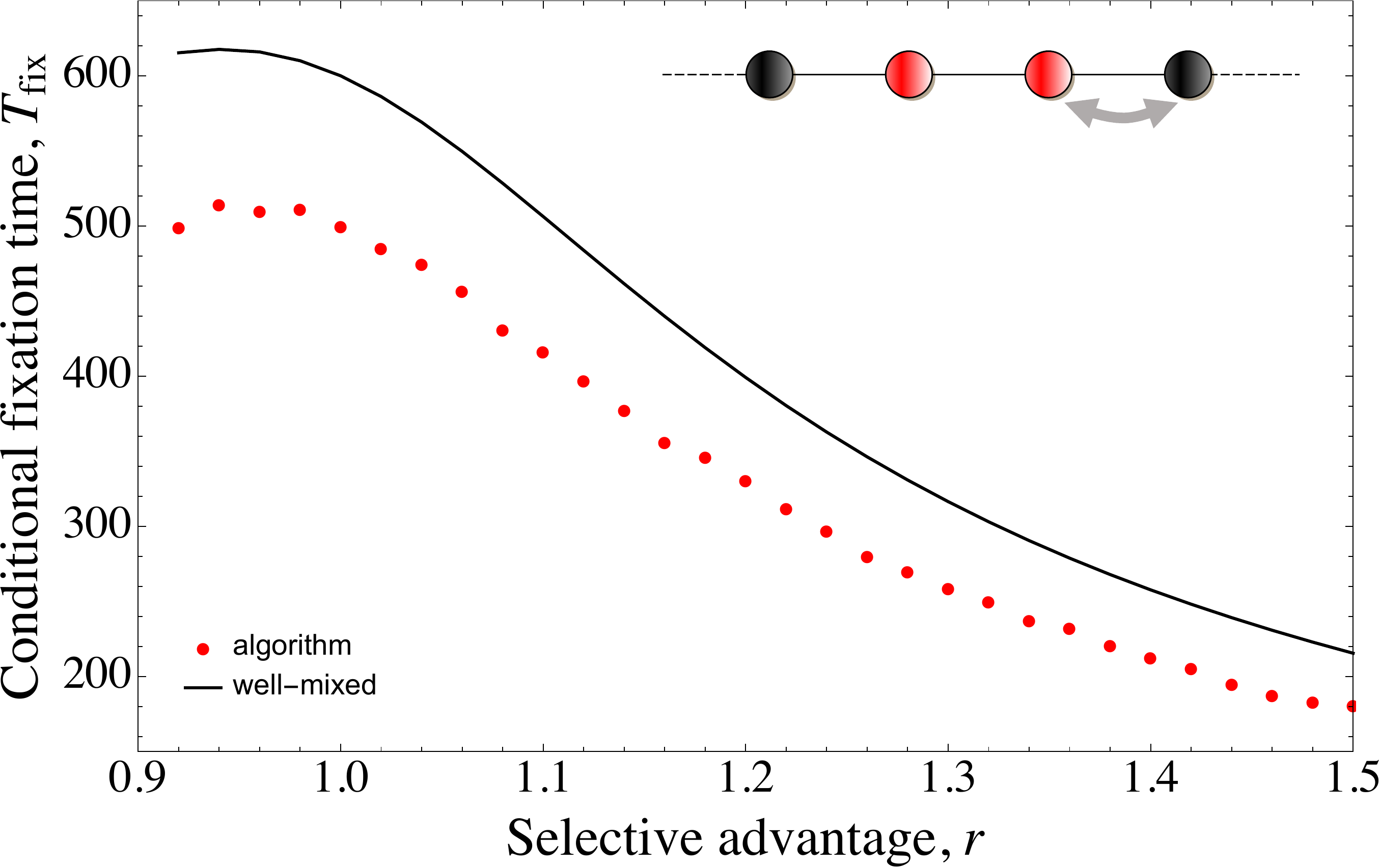}
\caption{\small An algorithm that uses motion to maximize the number of clusters, $c$, of mutants between every update step can accelerate the evolutionary process such that the conditional time to fixation is even less than in the well-mixed case. \label{fig:ALGORITHM}}
\end{figure}
\begin{figure}[ht]
\centering
\includegraphics[width=0.8\columnwidth]{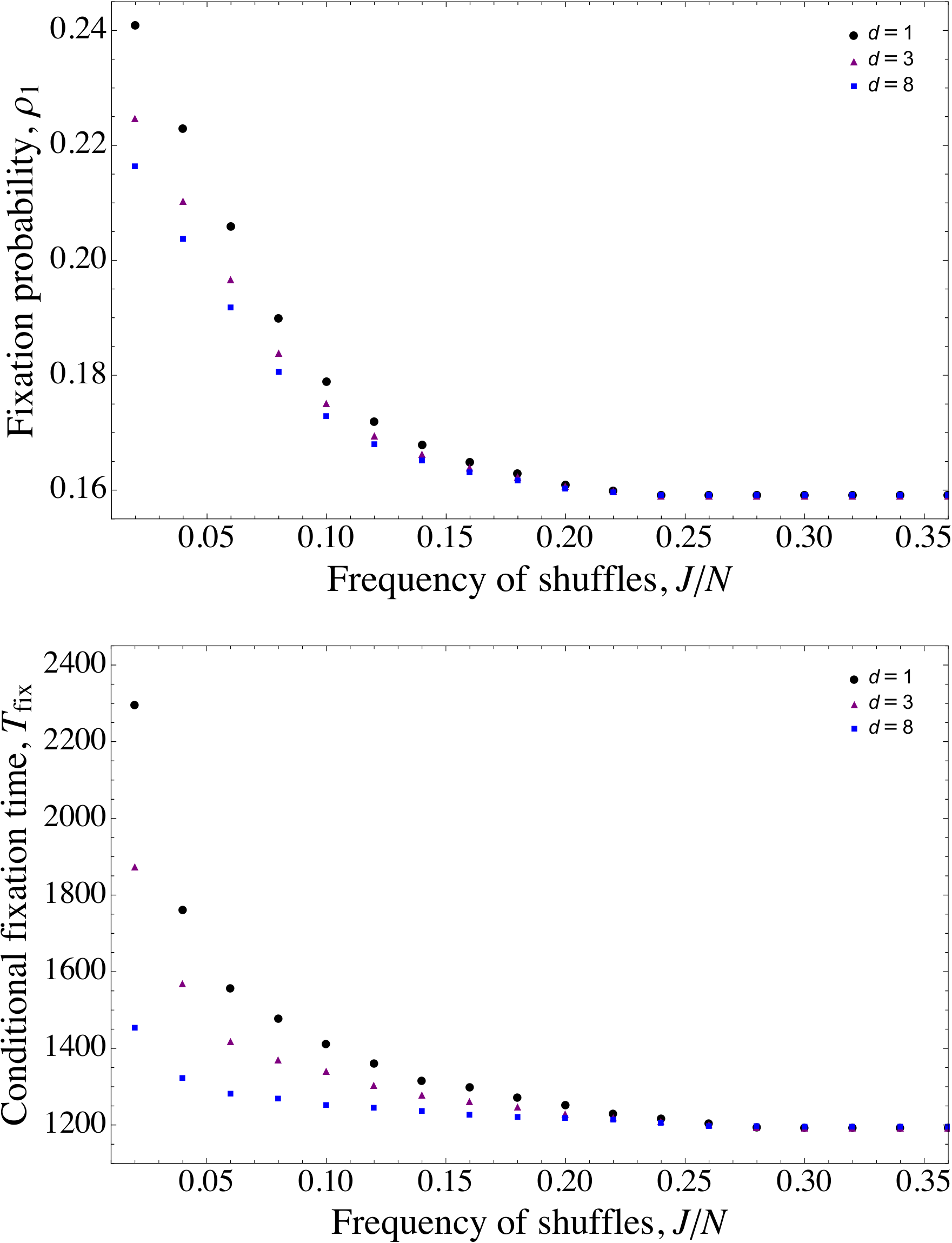}
\caption{\small The effect of multiple shuffles. Fixation probability (top) and conditional fixation time (bottom) are plotted for the death-birth process with a single mutant of relative fitness $r=1.5$ in a cycle with motion. During each update step an individual is selected at random and swaps with another random individual who is at most $d$ loci away. This swapping procedure is repeated $J$ times before a reproduction event occurs. The curves shown have $d=1$ (black circles), $d=3$ (purple triangles), and $d=8$ (blue squares). The $x$-axis denotes $J/N$, where we have arrived at the $y$-value by fixing $J/N$ and averaging over ten values of $N$, to avoid the fact that a certain combination of $d$ and $J$ will have a stronger or weaker impact on evolutionary dynamics depending on the actual size $N$ of the cycle. After a certain amount of shuffling, no further effect is seen; types are randomly distributed with uniform probability across the cycle, corresponding to the percolation limit. \label{fig:DBringJplot}}
\end{figure}

\end{document}